\newtheorem{thm}{Theorem}
\newtheorem{col}{Corollary}
\newtheorem{rmk}{Remark}
\newtheorem{df}{Definition}
\newtheorem{exm}{Example}
\begin{document}

\title{Completely Distributed Guaranteed-performance Consensualization for High-order Multiagent Systems with Switching Topologies}
\author{{Jianxiang~Xi, Cheng~Wang, Hao~Liu$^*$, Le~Wang}\vspace{-1em}
\thanks{This work was supported by the National Natural Science Foundation of China under Grants 61374054, 61503012, 61503009, 61333011 and 61421063, and by Innovation Foundation of High-Tech Institute of Xi'an under Grants 2015ZZDJJ03, also supported by Innovation Zone Project under Grants 17-163-11-ZT-004-017-01.}
\thanks{Jianxiang~Xi,~Cheng~Wang,~Zhong~Wang are with High-Tech Institute of Xi'an, Xi'an, 710025, P.R. China, and Hao~Liu is with School of Astronautics, Beihang University, Beijing, 100191, P.R. China. (e-mail: liuhao13@buaa.edu.cn).}}

\markboth{IEEE TRANSACTIONS ON SYSTEMS, MAN, AND CYBERNETICS: SYSTEMS}{Shell
\MakeLowercase{\textit{et al.}}: Bare Demo of IEEEtran.cls for
Journals}
\maketitle

\begin{abstract}
The guaranteed-performance consensualization for high-order linear and nonlinear multiagent systems with switching topologies is respectively realized in a completely distributed manner in the sense that consensus design criteria are independent of interaction topologies and switching motions. The current paper firstly proposes an adaptive consensus protocol with guaranteed-performance constraints and switching topologies, where interaction weights among neighboring agents are adaptively adjusted and state errors among all agents can be regulated. Then, a new translation-adaptive strategy is shown to realize completely distributed guaranteed-performance consensus control and an adaptive guaranteed-performance consensualization criterion is given on the basis of the Riccati inequality. Furthermore, an approach to regulate the consensus control gain and the guaranteed-performance cost is proposed in terms of linear matrix inequalities. Moreover, main conclusions for linear multiagent systems are extended to Lipschitz nonlinear cases. Finally, two numerical examples are provided to demonstrate theoretical results.

\end{abstract}

\begin{keywords}
Multiagent system, guaranteed-performance control, adaptive consensus, gain regulation, Lipschitz nonlinearity.
\end{keywords}

\section{Introduction}\label{section1}
\PARstart{M}{O}{T}{I}{V}{A}{T}{E}{D} by extensive applications in different fields, such as flocking, distributed computation, synchronization, formation and containment control \cite{c1}-\hspace{-0.5pt}\cite{c10}, {\it et al}., distributed cooperative control for dynamical multiagent systems has received a great deal of attention in different engineering communities in recent years. In many practical applications of multiagent systems, all agents are required to achieve an agreement on some quantity, which is usually referred to consensus or synchronization. Generally speaking, according to the autonomous dynamics of each agent, multiagent systems can be categorized into linear ones and nonlinear ones. For linear multiagent systems, the whole dynamics can often be divided into two parts: consensus dynamics and disagreement dynamics, where disagreement dynamics is independent of consensus dynamics. However, for nonlinear multiagent systems, the whole dynamics cannot be decomposed, where the Lipschitz nonlinearity has specific connotation and has been extensively discussed. By using structure features of the above two types of multiagent systems, different consensus protocols were proposed and some important and interesting conclusions were shown in \cite{c11}-\hspace{-0.5pt}\cite{c18}, where the consensus performance constraints were not considered. \par

For many practical multiagent systems to achieve consensus, some constrains should be imposed, such as network structure constraints \cite{c19}, motion constraints of the maximum speed and acceleration \cite{c190}, and utility constraints \cite{c191}. If consensus constrains include certain cost functions, which are required to be minimum or maximum, then these problems can be modeled as optimal or suboptimal consensus. The cost functions can be divided into the individual ones and the global ones. For the individual cost function, some global goals are realized by optimizing the local objective function of each agent, as shown in \cite{c20} and \cite{c21}. For the global cost function, it is required that the whole consensus performance is minimized or maximized by local interactions among neighboring agents. For first-order multiagent systems, Cao and Ren \cite{c22} proposed a linear quadratic global cost function and gave optimal consensus criteria under the condition that the interaction topology was modeled as a complete graph. For second-order multiagent systems, Guan {\it et al}. \cite{c23} discussed guaranteed-performance consensus by the hybrid impulsive control approach. For high-order multiagent systems, guaranteed-performance consensus analysis and design problems were investigated in \cite{c24,c25,c26,c27}. It should be pointed out that guaranteed-performance consensus is intrinsically suboptimal and it is difficult to achieve optimal consensus for second-order and high-order multiagent systems with global cost functions. Furthermore, guaranteed-performance consensus criteria in \cite{c24,c25,c26,c27} are not completely distributed since they depend on the Laplacian matrix of the interaction topology or its nonzero eigenvalues. \par

In \cite{c28,c29,c30}, an interesting scaling-adaptive strategy was proposed to realize completely distributed consensus control for linear and Lipschitz nonlinear multiagent systems without performance constraints, where the impacts of the nonzero eigenvalues of the Laplacian matrix were eliminated by introducing a scaling factor. The scaling factor is inversely proportional to the minimum nonzero eigenvalue of the interaction topology and cannot be precisely determined since the minimum nonzero eigenvalue is dependent on the algebraic connectivity of the interaction topology which is difficult to be determined. When performance constraints are not involved, it is not necessary to determine the scaling factor. However, the precise value of the scaling factor is required when there are global cost functions. Therefore, the scaling-adaptive strategy cannot be applied to investigate multiagent systems with performance constraints. To the best of our knowledge, the following interesting and challenging guaranteed-performance consensus problems are still open: (i) How to achieve completely distributed guaranteed-performance consensus; (ii) How to determine the impacts of switching topologies with adaptively adjusting weights and the Lipschitz nonlinearity; (iii) How to regulate the consensus control gain and to guarantee consensus performance among all agents.\par

The current paper proposes a completely distributed guaranteed-performance consensus scheme in the sense that consensualization criteria do not depend on any information of interaction topologies and switching motions. Firstly, a new guaranteed-performance consensus protocol with switching topologies and adaptively adjusting weights is constructed, which can regulate consensus performance among all agents instead of only neighboring agents. Then, by using the specific feature of a complete graph that all its eigenvalues are identical, a translation-adaptive strategy is given to realize guaranteed-performance consensus in a completely distributed manner and adaptive guaranteed-performance consensualization criteria for high-order linear multiagent systems with switching topologies are presented. Furthermore, a regulation approach of the consensus control gain and the guaranteed-performance cost is shown by the linear matrix inequality (LMI) tool. Finally, adaptive guaranteed-performance consensus design criteria for high-order Lipschitz nonlinear multiagent systems with switching topologies are proposed. \par

Compared with closely related works on consensus, the current paper has four novel features as follows. Firstly, the current paper proposes a new translation-adaptive strategy to realize completely distributed guaranteed-performance consensus control. The methods for guaranteed-performance consensus in \cite{c23,c24,c25,c26,c27} are not completely distributed and the scaling-adaptive strategy in \cite{c28,c29,c30} cannot guarantee the consensus regulation performance. Secondly, the current paper determines the impacts of switching topologies with time-varying weights and guarantees the consensus regulation performance among all agents. In \cite{c23,c24,c25,c26,c27}, the impacts of switching topologies with time-varying weights on the consensus regulation performance among neighboring agents cannot be dealt with. Thirdly, an explicit expression of the guaranteed-performance cost is determined and an approach to regulate the consensus control gain and the guaranteed-performance cost is presented. The methods in \cite{c23,c24,c25,c26,c27} cannot determine the impacts of time-varying weights on the guaranteed-performance cost and cannot regulate the consensus control gain. Fourthly, the current paper investigates these cases that each agent contains Lipschitz nonlinear dynamics. In \cite{c23,c24,c25,c26,c27}, it was supposed that the dynamics of each agent is linear. \par

The remainder of the current paper is organized as follows. Section II models interaction topologies among agents by switching connected graphs with time-varying weights and describes the adaptive guaranteed-performance consensus problem. In Section III, adaptive guaranteed-performance consensualization criteria for high-order linear multiagent systems are presented. Section IV extends main conclusions for high-order linear multiagent systems to high-order nonlinear ones. Numerical simulations are given to illustrate theoretical results in Section V, followed by some concluding remarks in Section VI. \par

\emph{Notations:} ${\mathbb{R}^d}$ denotes the real column vector space of dimension $d$ and ${\mathbb{R}^{d \times d}}$ stands for the set of $d \times d$ dimensional real matrices. ${I_d}$ denotes the identity matrix of dimension $d$. $0$ and ${\bf{0}}$ represent the zero number and the zero column vector with a compatible dimension, respectively. ${{\bf{1}}_N}$ represents an $N$-dimensional column vector, whose entries are equal to $1$. ${Q^T}$ and ${Q^{ - 1}}$ denote the transpose and the inverse matrix of $Q$, respectively. ${R^T} = R > {\rm{0}}$ and ${R^T} = R < {\rm{0}}$ mean that the symmetric matrix $R$ is positive definite and negative definite, respectively. The notation $ \otimes $ stands for the Kronecker product. $\left\| x \right\|$ represents the two norm of the vector $x$. The symmetric terms of a symmetric matrix are denoted by the symbol *.

\section{Problem description}\label{section2}

\subsection{Modeling interaction topology}
A connected undirected graph $G$ with $N$ nodes can be used to depict the interaction topology of a multiagent system with $N$ identical agents, where each node represents an agent, the edge between two nodes denotes the interaction channel between them and the edge weight stands for the interaction strength. The graph $G$ is said to be connected if there at least exists an undirected path between any two nodes. The graph $G$ is said to be a complete graph if there exists an undirected edge between any two nodes. It is clear that a complete graph is connected. More basic concepts and results about graph theory can be found in \cite{c300}. \par

Let $\sigma (t):\left[ {\left. {0, + \infty } \right)} \right. \to \kappa $ denote the switching signal with $\kappa $ an index of the switching set consisting of several connected undirected graphs, where switching movements satisfy that ${t_m} - {t_{m - 1}} \ge {T_{\rm{d}}}~{\rm{ }}(\forall m \ge 1)$ with ${T_{\rm{d}}} > 0$ for switching sequences $\{ {t_i}:i = 0,1,2, \cdots \} $. The index set of all neighbors of node ${v_i}$ is denoted by ${N_{\sigma (t),i}} = \left\{ {k|\left( {{v_k},{v_i}} \right) \in E\left( {{G_{\sigma (t)}}} \right)} \right\}$, where $({v_k},{v_i})$ denotes the edge between node ${v_k}$ and node ${v_i}$ and $E\left( {{G_{\sigma (t)}}} \right)$ is the edge set of the graph ${G_{\sigma (t)}}$. Define the 0-1 Laplacian matrix of ${G_{\sigma (t)}}$ as ${L_{\sigma (t),0}} = \left[ {{l_{\sigma (t),ik}}} \right] \in {\mathbb{R}^{N \times N}}$ with ${l_{\sigma (t),ii}} =  - \sum\nolimits_{k = 1,k \ne i}^N {{l_{\sigma (t),ik}}} $, ${l_{\sigma (t),ik}} =  - 1$ if $({v_k},{v_i}) \in E\left( {{G_{\sigma (t)}}} \right)$ and ${l_{\sigma (t),ik}} = 0$ otherwise and the Laplacian matrix of ${G_{\sigma (t)}}$ as ${L_{\sigma (t),w}} = \left[ {{{\tilde l}_{\sigma (t),ik}}(t)} \right] \in {\mathbb{R}^{N \times N}}$ with ${\tilde l_{\sigma (t),ii}}(t) =  - \sum\nolimits_{k = 1,k \ne i}^N {{l_{\sigma (t),ik}}{w_{\sigma (t),ik}}} (t)$ and ${\tilde l_{\sigma (t),ik}}(t) = {l_{\sigma (t),ik}}{w_{\sigma (t),ik}}(t){\rm{ }}(i \ne k)$, where the function ${w_{\sigma (t),ik}}(t) \ge 1$ is designed later. It can be found that ${\tilde l_{\sigma (t),ii}}(t) = \sum\nolimits_{k \in {N_{\sigma (t),i}}} {{w_{\sigma (t),ik}}(t)} $, ${L_{\sigma (t),0}}{{\bf{1}}_N} = {L_{\sigma (t),w}}{{\bf{1}}_N} = {\bf{0}}$, and ${L_{\sigma (t),0}}$ is piecewise continuous and is constant at no switching time, but ${L_{\sigma (t),w}}$ may be not. Specially, for ${L_{\sigma (t),0}}$ and ${L_{\sigma (t),w}}$, the zero eigenvalue is simple and all the other $N - 1$ eigenvalues are positive since all the topologies in the switching set are connected. \par

\subsection{Describing guaranteed-performance consensualization}
The dynamics of each agent is described by
\begin{eqnarray}\label{1}
{\dot x_i}(t) = A{x_i}(t) + B{u_i}(t){\rm{ }}\left( {i = 1,2, \cdots ,N} \right),
\end{eqnarray}
where $A \in {\mathbb{R}^{d \times d}},$ $B \in {\mathbb{R}^{d \times p}},$ and ${x_i}(t)$ and ${u_i}(t)$ are the state and the control input of agent $i$, respectively. The following adaptive guaranteed-performance consensus protocol is proposed for agent $i$ to apply the state information of its neighbors
\begin{eqnarray}\label{2}
\left\{ \begin{array}{l}
 {u_i}(t) = {K_u}\sum\limits_{k \in {N_{\sigma (t),i}}} {{w_{\sigma (t),ik}}(t)\left( {{x_k}(t) - {x_i}(t)} \right)} , \\
 {{\dot w}_{\sigma (t),ik}}(t) = {\left( {{x_k}(t) - {x_i}(t)} \right)^T}{K_w}\left( {{x_k}(t) - {x_i}(t)} \right), \\
 {J_x} \! \! = \! \! \frac{1}{N} \! \! \sum\limits_{i = 1}^N \! \! {\sum\limits_{k = 1}^N \! \! {\int_0^{ + \infty } \! \! {{{\left( {{x_k}(t) \! - \! {x_i}(t)} \right)}^T}\! Q \! \left( {{x_k}(t) \! - \! {x_i}(t)} \right)\! {\rm{d}}t} } } , \\
 \end{array} \right.
\end{eqnarray}
where ${K_u} \in {\mathbb{R}^{p \times d}}$ and ${K_w} \in {\mathbb{R}^{d \times d}}$ are gain matrices, and ${Q^T} = Q > 0$ is used to guarantee the consensus regulation performance. It is assumed that ${w_{\sigma (t),ik}}(t)$ is a bounded function with an upper bound denoting ${\gamma _{ik}}$, ${w_{\sigma (0),ik}}(0) = 1{\rm{ }}(i \ne k)$ and ${w_{\sigma ({t_m}),ik}}({t_m}) = 1$ if the edge $({v_k},{v_i})$ is newly added at switching time ${t_m}$. In this case, ${w_{\sigma (t),ik}}(t)$ is a practical interaction strength of the channel $({v_k},{v_i})$ if agent $k$ is a neighbor of agent $i$, and can be regarded as a virtual interaction strength of the channel $({v_k},{v_i})$ if agent $k$ is not a neighbor of agent $i$. Especially, the initial value of the interaction strength is designed as $1$ once a virtual channel becomes a practical one.\par
The definition of the adaptive guaranteed-performance consensualization of high-order multiagent systems is given as follows.
\begin{df}\label{definition1}
Multiagent system (\ref{1}) is said to be adaptively guaranteed-performance consensualizable by protocol (\ref{2}) if there exist ${K_u}$ and ${K_w}$ such that ${\lim _{t \to  + \infty }}\left( {{x_i}(t) - {x_k}(t)} \right) = {\bf{0}}{\rm{ }}\left( {i,k = 1,2, \cdots ,N} \right)$ and ${J_x} \leqslant {J^*}$ for any bounded initial states ${x_i}(0){\rm{ }}\left( {i = 1,2, \cdots ,N} \right)$, where ${J^*}$ is said to be the guaranteed-performance cost.
\end{df}

The current paper mainly investigates the following four problems: (i) How to design ${K_u}$ and ${K_w}$ such that multiagent system (\ref{1}) achieves adaptive guaranteed-performance consensus; (ii) How to determine the impacts of switching topologies on adaptive guaranteed-performance consensus under the condition that interaction strengths are time-varying; (iii) How to regulate the consensus control gain and the guaranteed-performance cost; (iv) How to extend main results for high-order linear multiagent systems to high-order Lipschitz nonlinear ones.\par

\begin{rmk}\label{remark1}
The consensus protocol given in (\ref{2}) can realize a completely distributed guaranteed-performance consensus control by adaptively regulating interaction weights, but the global information of the interaction topology is required if the consensus protocol is not adaptive. Moreover, protocol (\ref{2}) has two critical characteristics. The first one is that interaction strengths ${w_{\sigma (t),ik}}(t){\rm{ (}}i \ne k{\rm{)}}$ are time-varying, while it was usually assumed that interaction strengths are time-invariant, but the neighbor set is time-varying in most consensus works with switching topologies (see \cite{c10}, \cite{c14}, \cite{c24} and references therein). Moreover, for adaptive consensus protocols with fixed topologies in \cite{c28,c29,c30}, interaction strengths may be monotonously increasing. However, for adaptive consensus with switching topologies, interaction strengths may be suddenly decreasing at some switching times. Hence, it is more difficult to determine the impacts of switching topologies on the adaptive consensus property and the upper bound of the guaranteed-performance cost. The second one is that protocol (\ref{2}) can guarantee the consensus regulation performance between any two agents by ${J_x}$ even if they are not neighboring. However, the consensus regulation performance among neighboring agents can only be ensured by the index function in \cite{c23,c24,c25,c26,c27}. Furthermore, ${J_r}(t)$ is usually called the performance regulation term, which can be realized by choosing a proper $Q$. The matrix $Q$ can be applied to ensure the regulation performance of relative motions among neighboring agents. For practical multiagent systems, $Q$ is often chosen as a diagonal matrix. In this case, a bigger coupling weight in $Q$ can ensure a smaller squared sum of the corresponding element of the state error.

\end{rmk}

\section{Adaptive guaranteed-performance consensualization criteria}\label{section3}
In this section, by the nonsingular transformation, the consensus and disagreement dynamics of multiagent system (\ref{1}) are first determined, respectively. Then, based on the Riccati inequality, adaptive guaranteed-performance consensualization criteria are proposed, and the guaranteed-performance cost ${J^*}$ is meanwhile determined. Finally, an approach to regulate the consensus control gain and the guaranteed-performance cost is presented.\par

Let $x(t) = {\left[ {x_1^T(t),x_2^T(t), \cdots ,x_N^T(t)} \right]^T}{\rm{ ,}}$ then the dynamics of multiagent system (\ref{1}) with protocol (\ref{2}) can be written as
\begin{eqnarray}\label{3}
\dot x(t) = \left( {{I_N} \otimes A - {L_{\sigma (t),w}} \otimes B{K_u}} \right)x(t).
\end{eqnarray}
Let $0 = {\lambda _{\sigma (t),1}} < {\lambda _{\sigma (t),2}} \le  \cdots  \le {\lambda _{\sigma (t),N}}$ denote the eigenvalues of ${L_{\sigma (t),0}}$, then there exists an orthonormal matrix ${U_{\sigma (t)}} = \left[ {{{{{\bf{1}}_N}} \mathord{\left/ {\vphantom {{{{\bf{1}}_N}} {\sqrt N }}} \right. \kern-\nulldelimiterspace} {\sqrt N }},{{\tilde U}_{\sigma (t)}}} \right]$ with ${\tilde U_{\sigma (t)}} \in {\mathbb{R}^{N \times (N - 1)}}$ such that
\[
U_{\sigma (t)}^T{L_{\sigma (t),0}}{U_{\sigma (t)}} = {\rm{diag}}\left\{ {{\lambda _{\sigma (t),1}},{\lambda _{\sigma (t),2}}, \cdots ,{\lambda _{\sigma (t),N}}} \right\}.
\]
Since all interaction topologies in the switching set are undirected, one has ${L_{\sigma (t),w}}{{\bf{1}}_N} = {\bf{0}}$ and ${\bf{1}}_N^T{L_{\sigma (t),w}} = {\bf{0}}$. Thus, one can show that
\[
U_{\sigma (t)}^T{L_{\sigma (t),w}}{U_{\sigma (t)}} = \left[ {\begin{array}{*{20}{c}}
   0 & {{{\bf{0}}^T}}  \\
   {\bf{0}} & {\tilde U_{\sigma (t)}^T{L_{\sigma (t),w}}{{\tilde U}_{\sigma (t)}}}  \\
\end{array}} \right].
\]
Due to ${T_{\rm{d}}} > 0$, the matrix ${U_{\sigma (t)}}$ is piecewise continuous and is constant in the switching interval. Hence, let $\tilde x(t) = \left( {U_{\sigma (t)}^T \otimes {I_d}} \right)x(t) = {\left[ {\tilde x_1^T(t),{\zeta ^T}(t)} \right]^T}$ with $\zeta (t) = {\left[ {\tilde x_2^T(t),\tilde x_3^T(t), \cdots ,\tilde x_N^T(t)} \right]^T}$, then multiagent system (\ref{3}) can be transformed into
\begin{eqnarray}\label{4}
{\dot {\tilde x}_1}(t) = A{\tilde x_1}(t),
\end{eqnarray}
\begin{eqnarray}\label{5}
\dot \zeta (t) = \left( {{I_{N - 1}} \otimes A - \tilde U_{\sigma (t)}^T{L_{\sigma (t),w}}{{\tilde U}_{\sigma (t)}} \otimes B{K_u}} \right)\zeta (t).
\end{eqnarray}
Define
\begin{eqnarray}\label{6}
{x_{\bar c}}(t) \buildrel \Delta \over = \sum\limits_{i = 2}^N {{U_{\sigma (t)}}{e_i} \otimes {{\tilde x}_i}(t)},
\end{eqnarray}
\begin{eqnarray}\label{7}
{x_c}(t) \buildrel \Delta \over = {U_{\sigma (t)}}{e_1} \otimes {\tilde x_1}(t) = \frac{1}{{\sqrt N }}{{\bf{1}}_N} \otimes {\tilde x_1}(t),
\end{eqnarray}
where ${e_i}$ $(i \in \{ 1,2, \cdots ,N\} )$ denotes an $N$-dimensional column vector with the $i$th element $1$ and $0$ elsewhere. Due to
\[
\sum\limits_{i = 2}^N {{e_i} \otimes {{\tilde x}_i}(t)}  = {\left[ {{{\bf{0}}^T},{\zeta ^T}(t)} \right]^T},
\]
one can show by (\ref{6}) that
\begin{eqnarray}\label{8}
{x_{\bar c}}(t) = \left( {{U_{\sigma (t)}} \otimes {I_d}} \right){\left[ {{{\bf{0}}^T},{\zeta ^T}(t)} \right]^T}.
\end{eqnarray}
By (\ref{7}), one has
\begin{eqnarray}\label{9}
{x_c}(t) = \left( {{U_{\sigma (t)}} \otimes {I_d}} \right){\left[ {\tilde x_1^T(t),{{\bf{0}}^T}} \right]^T}.
\end{eqnarray}
From (\ref{8}) and (\ref{9}), one can see that ${x_{\bar c}}(t){\rm{ }}$ and ${x_c}(t){\rm{ }}$ are linearly independent since ${U_{\sigma (t)}} \otimes {I_d}$ is nonsingular. Due to $\left( {U_{\sigma (t)}^T \otimes {I_d}} \right)x(t) = {\left[ {\tilde x_1^T(t),{\zeta ^T}(t)} \right]^T}$, one has \[x(t) = {x_{\bar c}}(t) + {x_c}(t){\rm{.}}\]
According to the structure of ${x_c}(t){\rm{ }}$ shown in (\ref{7}), multiagent system (\ref{1}) achieves consensus if and only if ${\lim _{t \to  + \infty }}\zeta (t) = {\bf{0}}$; that is, subsystems (\ref{4}) and (\ref{5}) describe the consensus and disagreement dynamics of multiagent system (\ref{1}).\par

Based on the above analysis, the following theorem gives a sufficient condition for adaptive guaranteed-performance consensualization in terms of the Riccati inequality, which can realize completely distributed guaranteed-performance consensus control.\par

\begin{thm}\label{theorem1}
For any given translation factor $\gamma  > 0$, multiagent system (\ref{1}) is adaptively guaranteed-performance consensualizable by protocol (\ref{2}) if there exists a matrix ${R^T} = R > 0$ such that
\[
RA + {A^T}R - \gamma RB{B^T}R + 2Q \le 0.
\]
In this case, ${K_u} = {B^T}R$, ${K_w} = RB{B^T}R$ and the guaranteed-performance cost satisfies that ${J^ * } = J_{x(0)}^* + J_{x(t)}^*,$ where
\[
J_{x(0)}^* = {x^T}(0)\left( {\left( {{I_N} - \frac{1}{N}{{\bf{1}}_N}{\bf{1}}_N^T} \right) \otimes R} \right)x(0),
\]
\[
J_{x(t)}^* \! = \! \gamma \! \! \int_0^{ + \infty } \! {{x^T}(t) \! \left( {\left( {{I_N} - \frac{1}{N}{{\bf{1}}_N}{\bf{1}}_N^T} \right) \otimes RB{B^T}R} \right) \! x(t)} {\rm{d}}t.
\]\vspace{0.05em}
\end{thm}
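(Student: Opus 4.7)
The plan is to reduce to the disagreement dynamics $\zeta$ in (5) and then run a composite Lyapunov argument that makes the topology dependence cancel exactly. First I would rewrite the cost in $\zeta$-coordinates. Using $\tilde U_{\sigma(t)}\tilde U_{\sigma(t)}^T = I_N - \frac{1}{N}\mathbf{1}_N\mathbf{1}_N^T$ together with the elementary identity $\sum_{i,k=1}^N (x_k-x_i)^T M(x_k-x_i) = 2N\,x^T((I_N-\frac{1}{N}\mathbf{1}_N\mathbf{1}_N^T)\otimes M)x$, the targets become $J_x = 2\int_0^{+\infty}\zeta^T(I_{N-1}\otimes Q)\zeta\,dt$, $J_{x(0)}^* = \zeta^T(0)(I_{N-1}\otimes R)\zeta(0)$, and $J_{x(t)}^* = \gamma\int_0^{+\infty}\zeta^T(I_{N-1}\otimes RBB^TR)\zeta\,dt$. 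So the task becomes (i) $\zeta(t)\to\mathbf{0}$ and (ii) $2\int_0^{+\infty}\zeta^T(I_{N-1}\otimes Q)\zeta\,dt \leq J_{x(0)}^* + J_{x(t)}^*$.

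Next I would design a composite functional $V(t) = V_1(t) + V_2(t)$ where $V_1(t) = \zeta^T(t)(I_{N-1}\otimes R)\zeta(t)$ is the disagreement energy and $V_2(t) = \sum_{\{i,k\}\in E(G_{\sigma(t)}),\,i<k}\bigl(w_{\sigma(t),ik}^2(t)-1\bigr)$ is the translation-adaptive weight energy. The $-1$ offset is crucial: combined with $w_{\sigma(0),ik}(0)=1$ and the reset $w=1$ on newly added edges, it yields $V_2(0)=0$ (so $V(0)$ equals exactly $J_{x(0)}^*$) and $V_2(t)\geq 0$ for all $t$ (since $\dot w_{ik}\geq 0$ keeps weights $\geq 1$). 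The key algebraic identity driving the whole argument is $\zeta^T(\tilde U_{\sigma(t)}^T L_{\sigma(t),w}\tilde U_{\sigma(t)}\otimes M)\zeta = \sum_{\{i,k\}\in E,\,i<k}w_{\sigma(t),ik}(x_k-x_i)^T M(x_k-x_i)$, which makes the coupling between the $\zeta$-energy and the weight-energy fully explicit.

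Between consecutive switching times I would compute $\dot V_1 = \zeta^T(I_{N-1}\otimes(RA+A^TR))\zeta - 2\zeta^T(\tilde U_{\sigma(t)}^T L_{\sigma(t),w}\tilde U_{\sigma(t)}\otimes RBB^TR)\zeta$ after substituting $K_u = B^TR$. Applying the Riccati inequality bounds the first term by $-2\zeta^T(I_{N-1}\otimes Q)\zeta + \gamma\zeta^T(I_{N-1}\otimes RBB^TR)\zeta$. With $K_w = RBB^TR$ and the identity above, $\dot V_2 = 2\zeta^T(\tilde U_{\sigma(t)}^T L_{\sigma(t),w}\tilde U_{\sigma(t)}\otimes RBB^TR)\zeta$ exactly cancels the remaining coupling term in $\dot V_1$, leaving the clean estimate $\dot V \leq -2\zeta^T(I_{N-1}\otimes Q)\zeta + \gamma\zeta^T(I_{N-1}\otimes RBB^TR)\zeta$. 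At each switching time $t_m$, $V_1$ is continuous while $V_2$ can only jump downward (newly added edges contribute $0$ and removed edges drop a nonnegative amount), so $\Delta V(t_m)\leq 0$. Integrating on $[0,T]$, using $V(T)\geq 0$ and $V(0)=J_{x(0)}^*$, and letting $T\to+\infty$ then yields $J_x \leq J_{x(0)}^* + J_{x(t)}^* = J^*$.

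For the consensus claim I would use the standing boundedness assumption $w_{\sigma(t),ik}(t)\leq \gamma_{ik}$: since $\dot w_{ik}\geq 0$ on each switching interval, these bounded monotone functions force $\int_0^{+\infty}(x_k-x_i)^T RBB^TR(x_k-x_i)\,dt<+\infty$ along every edge, and a Barbalat-type argument combined with the connectivity of each $G_{\sigma(t)}$ then gives $\zeta(t)\to\mathbf{0}$. The main obstacle I foresee is the switching bookkeeping on $V_2$: aligning the $-1$ offset with the ``reset to $1$'' rule so that $V_2(0)=0$, $V_2\geq 0$, and $\Delta V_2(t_m)\leq 0$ all hold simultaneously is exactly what makes the guaranteed-performance bound come out free of any topology- or switching-dependent slack. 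Without this careful choice the bound would pick up an extra $|E(G_{\sigma(0)})|$-type offset and fail to match the clean expression $J_{x(0)}^*+J_{x(t)}^*$ claimed in the theorem.
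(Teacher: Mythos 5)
Your route genuinely differs from the paper's. The paper's Lyapunov function has a third, ``translation'' term $\tfrac{\gamma}{N}\sum_{i\ne k}\left(\gamma_{ik}-w_{\sigma(t),ik}(t)\right)$ over \emph{all} pairs, and its edge term is $\sum_i\sum_{k\in N_{\sigma(t),i}}(w_{\sigma(t),ik}+l_{\sigma(t),ik})^2/2$, i.e.\ $(w-1)^2$ per edge rather than your $w^2-1$. With that bookkeeping the graph coupling in $\dot V_1$ is only partially cancelled: the $-L_{\sigma(t),0}$ part survives and the translation term adds $-2\gamma\,\zeta^T(I_{N-1}\otimes K_w)\zeta$, so the paper gets $\dot V=\sum_{i\ge2}\tilde x_i^T\left(RA+A^TR-2(\lambda_{\sigma(t),i}+\gamma)RB B^TR\right)\tilde x_i\le0$, which delivers consensus directly, and the cost's state-integral term is then recovered by evaluating the translation term at $t=0$ (steps (19)--(21)). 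Your choice cancels the weighted-Laplacian coupling \emph{exactly} and keeps the $+\gamma\,\zeta^T(I_{N-1}\otimes RBB^TR)\zeta$ slack explicitly; for the \emph{cost bound} this works and is arguably cleaner than the paper (it needs only $V_2(0)=0$, $V\ge0$, nonpositive jumps at switches, and $V_1$'s continuity via $\tilde U_{\sigma(t)}\tilde U_{\sigma(t)}^T=L_N$, and it avoids the paper's assumption $w_{ik}\to\gamma_{ik}$).

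The gap is in the consensus half. Your $\dot V$ estimate is indefinite, so convergence must come entirely from the separate integrability claim, and that claim does not hold as stated: because of the reset rule $w_{\sigma(t_m),ik}(t_m)=1$ for newly added edges, the weights are only \emph{piecewise} nondecreasing, so boundedness by $\gamma_{ik}$ bounds each monotone piece but not the total increase; with infinitely many switches (only a dwell time $T_{\rm d}>0$ is assumed) $\int_0^{+\infty}\dot w_{\sigma(t),ik}(t)\,{\rm d}t$ can diverge, and you cannot conclude $\int_0^{+\infty}(x_k-x_i)^TK_w(x_k-x_i)\,{\rm d}t<+\infty$. Moreover, even granting that integral finite, $K_w=RBB^TR$ is only positive semidefinite, so its integrability alone does not force $\zeta\to\mathbf{0}$; you would have to route through the $Q$-integral (finite by your own cost inequality once $J^*_{x(t)}<+\infty$ is known) together with a boundedness/uniform-continuity argument to apply Barbalat across switching instants. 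In short, the paper's translation term is exactly what your scheme discards, and it is what makes $\dot V\le0$ and hence consensus topology-free without any weight-integrability step; to complete your proof you must either reinstate a term of that type in $V$, or rigorously justify the finiteness of $\int_0^{+\infty}\dot w_{\sigma(t),ik}\,{\rm d}t$ despite the resets (the paper's own steps (19)--(20) gloss over the same reset issue, but its consensus conclusion does not rely on them).
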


\begin{proof}
First of all, we design ${K_u}$ and ${K_w}$ such that ${\lim _{t \to  + \infty }}\zeta (t) = {\bf{0}}$. Construct a new Lyapunov function candidate as follows
\[
\hspace{-7em}V(t) = {\zeta ^T}(t)\left( {{I_{N - 1}} \otimes R} \right)\zeta (t) \vspace{-4pt}
\]
\[
\hspace{3em} + \sum\limits_{i = 1}^N {\sum\limits_{k \in {N_{\sigma (t),i}}} {\frac{{{{\left( {{w_{\sigma (t),ik}}(t) + {l_{\sigma (t),ik}}} \right)}^2}}}{2}} }  \vspace{-10pt}
\]
\begin{eqnarray}\label{10}
\hspace{4em} + \frac{\gamma }{N}\sum\limits_{i = 1}^N {\sum\limits_{k = 1,k \ne i}^N {\left( {{\gamma _{ik}} - {w_{\sigma (t),ik}}(t)} \right)} } ,
\end{eqnarray}
where $\gamma  > 0$ and $R$ is the solution of $RA + {A^T}R - \gamma RB{B^T}R + 2Q \le 0$. Due to ${R^T} = R > 0$ and ${\gamma _{ik}} \ge {w_{\sigma (t),ik}}(t)$ $\left( {i,k = 1,2, \cdots ,N} \right)$, one can find that $V(t) \ge 0$. Since ${l_{\sigma (t),ik}}$ is piecewise continuous and is constant in the switching interval, the time derivative of $V(t)$ is
\[
\hspace{-1.25em}\dot V(t) = {\zeta ^T}(t)( {{I_{N - 1}} \otimes \left( {RA + {A^T}R} \right) - \tilde U_{\sigma (t)}^T{L_{\sigma (t),w}}{{\tilde U}_{\sigma (t)}} }
\vspace{-4pt}
\]
\[
\hspace{3.25em} \otimes\left. {\left( {RB{K_u}} \right.\left. { + K_u^T{B^T}R} \right)} \right)\zeta (t)
\sum\limits_{i = 1}^N {\sum\limits_{k \in {N_{\sigma (t),i}}} \hspace{-3pt}{\left( {{w_{\sigma (t),ik}}(t)} \right.} }
\vspace{-12pt}
\]
\begin{eqnarray}\label{11}
\hspace{3em}\left. { + {l_{\sigma (t),ik}}} \right){{\dot w}_{\sigma (t),ik}}(t)
\hspace{-3pt}-\hspace{-3pt} \frac{\gamma }{N}\sum\limits_{i = 1}^N {\sum\limits_{k = 1,k \ne i}^N {{{\dot w}_{\sigma (t),ik}}(t)} } .
\end{eqnarray}
From (\ref{2}), one can obtain that
\[
\sum\limits_{i = 1}^N {\sum\limits_{k \in {N_{\sigma (t),i}}}\hspace{-10pt} {\left( {{w_{\sigma (t),ik}}(t) \hspace{-3pt}+\hspace{-3pt} {l_{\sigma (t),ik}}} \right)} {{\dot w}_{\sigma (t),ik}}(t)}  \hspace{-1pt}-\hspace{-1pt} \frac{\gamma }{N}\sum\limits_{i = 1}^N \hspace{-3pt}{\sum\limits_{\scriptstyle k = 1 \hfill \atop
  \scriptstyle k \ne i \hfill}^N\hspace{-3pt} {{{\dot w}_{\sigma (t),ik}}(t)} } \vspace{-12pt}
\]
\begin{eqnarray}\label{12}
 = 2{x^T}(t)\left( {\left( {{L_{\sigma (t),w}} - {L_{\sigma (t),0}} - \gamma {L_N}} \right) \otimes {K_w}} \right)x(t),
\end{eqnarray}
where ${L_N}$ is the Laplacian matrix of a complete graph with the weights of all the edges ${1 \mathord{\left/ {\vphantom {1 N}} \right. \kern-\nulldelimiterspace} N}$. Due to ${U_{\sigma (t)}}U_{\sigma (t)}^T = {I_N}$, one can show that
\[{\tilde U_{\sigma (t)}}\tilde U_{\sigma (t)}^T = {I_N} - \frac{1}{N}{{\bf{1}}_N}{\bf{1}}_N^T = {L_N}.\]
Thus, one can derive that
 \[
\hspace{-6em} {x^T}(t)\left( {\left( {{L_{\sigma (t),w}} - {L_{\sigma (t),0}} - \gamma {L_N}} \right) \otimes {K_w}} \right)x(t) \vspace{-4pt}
 \]
 \[
= {\zeta ^T}(t)((\tilde U_{\sigma (t)}^T{L_{\sigma (t),w}}{{\tilde U}_{\sigma (t)}} - \tilde U_{\sigma (t)}^T{L_{\sigma (t),0}}{{\tilde U}_{\sigma (t)}} \vspace{-10pt}
 \]
\begin{eqnarray}\label{13}
 \hspace{-7.5em}~~- \gamma {I_{N - 1}}) \otimes {K_w})\zeta (t).
\end{eqnarray}
Let ${K_u} = {B^T}R$ and ${K_w} = RB{B^T}R$, then from (\ref{11}) to (\ref{13}), by $\tilde U_{\sigma (t)}^T{L_{\sigma (t),0}}{\tilde U_{\sigma (t)}} = {\rm{diag}}\left\{ {{\lambda _{\sigma (t),2}},{\lambda _{\sigma (t),3}}, \cdots ,{\lambda _{\sigma (t),N}}} \right\}$, one has
\[\dot V(t) = \! \sum\limits_{i = 2}^N \! {\tilde x_i^T(t) \! \left( {RA \! + \! {A^T}R \! - \! 2\left( {{\lambda _{\sigma (t),i}} \! + \! \gamma } \right)RB{B^T}R} \right) \! {{\tilde x}_i}(t)} .\]
Due to $\gamma  > 0$ and ${\lambda _{\sigma (t),i}} > 0{\rm{ }}\left( {i = 2,3, \cdots ,N} \right)$, one has
\[
\hspace{-9em}RA + {A^T}R - 2\left( {{\lambda _{\sigma (t),i}} + \gamma } \right)RB{B^T}R \vspace{-2pt}
\]
\[
\hspace{4em}\le RA + {A^T}R - 2\gamma RB{B^T}R{\rm{ }}(i = 2,3, \cdots ,N).
\]
If $RA + {A^T}R - \gamma RB{B^T}R + 2Q \le 0$, then $RA + {A^T}R - 2\gamma RB{B^T}R < 0$ since ${Q^T} = Q > 0$ and $RB{B^T}R \ge 0$. Thus, one can obtain that $\dot V(t) \le 0$ and $\dot V(t) \equiv 0$ if and only if ${\lim _{t \to  + \infty }}\zeta (t) = {\bf{0}}$, which means that multiagent system (\ref{1}) achieves adaptive consensus. \par

In the following, the guaranteed-performance cost is determined. One can show that
\[
\hspace{-6em}\frac{1}{N}\sum\limits_{i = 1}^N {\sum\limits_{k = 1}^N {{{\left( {{x_k}(t) - {x_i}(t)} \right)}^T}Q\left( {{x_k}(t) - {x_i}(t)} \right)} } \vspace{-10pt}
\]
\begin{eqnarray}\label{14}
\hspace{10em}= {x^T}(t)\left( {2{L_N} \otimes Q} \right)x(t).
\end{eqnarray}
Due to $U_{\sigma (t)}^T{L_N}{U_{\sigma (t)}} = {\rm{diag}}\left\{ {0,{I_{N - 1}}} \right\}$, one has
\begin{eqnarray}\label{15}
{x^T}(t)\left( {{L_N} \otimes Q} \right)x(t) = \sum\limits_{i = 2}^N {\tilde x_i^T(t)Q{{\tilde x}_i}(t)} .
\end{eqnarray}
For $h \ge 0$, define
\[{J_{xh}} \buildrel \Delta \over = \frac{1}{N}\sum\limits_{i = 1}^N {\sum\limits_{k = 1}^N {\int_0^h {{{\left( {{x_k}(t) - {x_i}(t)} \right)}^T}Q\left( {{x_k}(t) - {x_i}(t)} \right){\rm{d}}t} } } .\]
By (\ref{14}) and (\ref{15}), one can show that
\[
{J_{xh}} = \sum\limits_{i = 2}^N {\int_0^h {2\tilde x_i^T(t)Q{{\tilde x}_i}(t){\rm{d}}t} } .
\]
If $RA + {A^T}R - \gamma RB{B^T}R + 2Q \le 0$, then one has
\[
\hspace{-3em}{J_{xh}} \hspace{-3pt}=\hspace{-3pt} \sum\limits_{i = 2}^N {\int_0^h {\hspace{-4pt}2\tilde x_i^T(t)Q{{\tilde x}_i}(t){\rm{d}}t} }\hspace{-2pt} +\hspace{-3pt} \int_0^h \hspace{-4pt}{\dot V(t)} {\rm{d}}t\hspace{-2pt} -\hspace{-3pt} V(h) \hspace{-2pt}+ \hspace{-3pt}V(0)\vspace{-10pt}
\]
\begin{eqnarray}\label{16}
\hspace{1em} \le\hspace{-2pt}  - \hspace{-0pt}\gamma \sum\limits_{i = 2}^N {\int_0^{ + \infty }\hspace{-4pt} {\tilde x_i^T(t)RB{B^T}R{{\tilde x}_i}(t){\rm{d}}t} } \hspace{-2pt}+ \hspace{-3pt}V(0) \hspace{-3pt}-\hspace{-3pt} V(h).
\end{eqnarray}
Due to $\zeta (t) = \left[ {{{\bf{0}}_{(N - 1)d \times d}},{I_{(N - 1)d}}} \right]\left( {U_{\sigma (t)}^T \otimes {I_d}} \right)x(t)$ and ${\tilde U_{\sigma (t)}}\tilde U_{\sigma (t)}^T = {I_N} - {N^{ - 1}}{{\bf{1}}_N}{\bf{1}}_N^T$, one can show that
\[
\hspace{-16em}{\zeta ^T}(0)\left( {{I_{N - 1}} \otimes R} \right)\zeta (0) \vspace{-10pt}
\]
\begin{eqnarray}\label{17}
\hspace{4em}= {x^T}(0)\left( {\left( {{I_N} - \frac{1}{N}{{\bf{1}}_N}{\bf{1}}_N^T} \right) \otimes R} \right)x(0).
\end{eqnarray}
Due to ${w_{\sigma (0),ik}}(0) = 1$ and ${K_w} = RB{B^T}R \ge 0$, one has
\[
\hspace{-8em}\sum\limits_{i = 1}^N \sum\limits_{k \in {N_{\sigma (0),i}}} \frac{{{{\left( {{w_{\sigma (0),ik}}(0) + {l_{\sigma (0),ik}}} \right)}^2}}}{2} \vspace{-10pt}
\]
\begin{eqnarray}\label{18}
\hspace{3em}- \sum\limits_{i = 1}^N {\sum\limits_{k \in {N_{\sigma (h),i}}} {\frac{{{{\left( {{w_{\sigma (h),ik}}(h) + {l_{\sigma (h),ik}}} \right)}^2}}}{2}} }    \le 0.
\end{eqnarray}
Due to ${\lim _{t \to  + \infty }}\left( {{\gamma _{ik}} - {w_{\sigma (t),ik}}(t)} \right) = 0$, one can show that
\begin{eqnarray}\label{19}
\mathop {\lim }\limits_{h \to  + \infty } \sum\limits_{i = 1}^N {\sum\limits_{k = 1,k \ne i}^N {\left( {{\gamma _{ik}} - {w_{\sigma (h),ik}}(h)} \right)} }  = 0,
\end{eqnarray}
\begin{eqnarray}\label{20}
\sum\limits_{i = 1}^N {\sum\limits_{\scriptstyle k = 1 \hfill \atop
  \scriptstyle k \ne i \hfill}^N {\hspace{-2pt}\left( {{\gamma _{ik}} \hspace{-2pt}- \hspace{-2pt}{w_{\sigma (0),ik}}(0)} \right)} }  \hspace{-2pt}=\hspace{-2pt} \sum\limits_{i = 1}^N {\sum\limits_{\scriptstyle k = 1 \hfill \atop
  \scriptstyle k \ne i \hfill}^N {\int_0^{ + \infty } \hspace{-2pt}{{{\dot w}_{\sigma (t),ik}}(t)} {\rm{d}}t} } .
\end{eqnarray}
Since
\[
\sum\limits_{i = 1}^N {\sum\limits_{\scriptstyle k = 1 \hfill \atop
  \scriptstyle k \ne i \hfill}^N {\int_0^{ + \infty } \hspace{-4pt} {{{\dot w}_{\sigma (t),ik}}(t)} {\rm{d}}t} }   \hspace{-2pt}=  \hspace{-2pt}2N \hspace{-2pt}\int_0^{ + \infty } \hspace{-6pt}{{x^T}(t)\left( {{L_N} \hspace{-2pt} \otimes \hspace{-2pt} {K_w}} \right)x(t)} {\rm{d}}t,
\]
it can be derived by (\ref{20}) that
\[
\hspace{-10em}\frac{\gamma }{N}\sum\limits_{i = 1}^N {\sum\limits_{k = 1,k \ne i}^N {\left( {{\gamma _{ik}} - {w_{\sigma (0),ik}}(0)} \right)} } \vspace{-10pt}
\]
\begin{eqnarray}\label{21}
\hspace{5em} = 2\gamma \sum\limits_{i = 2}^N {\int_0^{ + \infty } {\tilde x_i^T(t)RB{B^T}R{{\tilde x}_i}(t){\rm{d}}t.} }
\end{eqnarray}
Let $h \to  + \infty $, then one can set from (\ref{16})-(\ref{19}) and (\ref{21}) that
\[
\hspace{-6.5em}{J^ * } = {x^T}(0)\left( {\left( {{I_N} - \frac{1}{N}{{\bf{1}}_N}{\bf{1}}_N^T} \right) \otimes R} \right)x(0) \vspace{-4pt}
\]
\[
\hspace{2.5em}{\rm{ + }}\gamma \int_0^{ + \infty }\hspace{-4pt} {{x^T}(t)\left(\hspace{-2pt} {\left( {{I_N} \hspace{-2pt}-\hspace{-2pt} \frac{1}{N}{{\bf{1}}_N}{\bf{1}}_N^T} \right)\hspace{-2pt} \otimes\hspace{-2pt} RB{B^T}R} \right)\hspace{-2pt}x(t)} {\rm{d}}t.
\]
Thus, the conclusion of Theorem \ref{theorem1} is obtained.
\end{proof}

If $(A,B)$ is stabilizable, then the Riccati equation $RA + {A^T}R - \gamma RB{B^T}R + 2Q = 0$ has a unique and positive definite solution $R$ for any given $\gamma  > 0$ as shown in \cite{c31}. In this case, the \emph{are} solver in the Matlab toolbox can be used to solve this Riccati equation. It should be pointed out that $\gamma$ represents the rightward translated quantity of the eigenvalues of ${L_{\sigma (t),0}}$, which can be previously given. Intuitionally speaking, because $ - RB{B^T}R$ is negative semidifinite, a large $\gamma$ can decrease the consensus control gain and the guaranteed-performance cost.\par

Furthermore, the consensus control gain and the guaranteed-performance cost can be regulated by introducing the gain factor $\varepsilon  > 0$ with $R \le \varepsilon I$, which means that $RB{B^T}R \le {\varepsilon ^2}B{B^T}$ if the maximum eigenvalue of $B{B^T}$ is not larger than $1$; that is, ${\lambda _{\max }}(B{B^T}) \le 1$. In this case, $\varepsilon $ can also be regarded as the maximum nonzero eigenvalue of $R$. In this case, both $\gamma$ and $R$ are variables, so it is difficult to determine the solution of $RA + {A^T}R - \gamma RB{B^T}R + 2Q \le 0$. Based on LMI techniques, by Schur complement lemma in \cite{c32}, the following corollary presents an adaptive guaranteed-performance consensualization criterion with a given gain factor, which can be solved by the \emph{feasp} solver in the LMI toolbox.\par

\begin{col}\label{corollary1}
For any given gain factor $\varepsilon  > 0$, multiagent system (\ref{1}) is adaptively guaranteed-performance consensualizable by protocol (\ref{2}) if ${\lambda _{\max }}(B{B^T}) \le 1$ and there exist $\gamma  > 0$ and ${\tilde R^T} = \tilde R \ge {\varepsilon ^{ - 1}}I$ such that
\[\tilde \Xi  = \left[ {\begin{array}{*{20}{c}}
   {A\tilde R + \tilde R{A^T} - \gamma B{B^T}} & {2\tilde RQ}  \\
   $*$ & { - 2Q}  \\
\end{array}} \right] < 0.\]
In this case, ${K_u} = {B^T}{\tilde R^{ - 1}}$, ${K_w} = {\tilde R^{ - 1}}B{B^T}{\tilde R^{ - 1}}$ and the guaranteed-performance cost satisfies that
\[
{J^ * } = \sum\limits_{i = 2}^N {\left( {\varepsilon {{\left\| {{{\tilde x}_i}(0)} \right\|}^2} + \gamma {\varepsilon ^2}\int_0^{ + \infty } {{{\left\| {{B^T}{{\tilde x}_i}(t)} \right\|}^2}} {\rm{d}}t} \right)} .
\]
\end{col}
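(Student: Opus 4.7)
The plan is to reduce Corollary~\ref{corollary1} to Theorem~\ref{theorem1} by a Schur complement followed by a congruence transformation, and then specialize the explicit cost formula using the eigenvalue bounds $\tilde R\ge\varepsilon^{-1}I$ and $\lambda_{\max}(BB^T)\le 1$.

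First I would set $R:=\tilde R^{-1}$. Because $\tilde R=\tilde R^T\ge\varepsilon^{-1}I>0$, the matrix $R$ is symmetric positive definite and satisfies $R\le\varepsilon I$. The lower-right block of $\tilde\Xi$ equals $-2Q<0$, so the Schur complement lemma renders $\tilde\Xi<0$ equivalent to
\[
A\tilde R+\tilde RA^T-\gamma BB^T+2\tilde RQ\tilde R<0.
\]
Pre- and post-multiplying by $R=\tilde R^{-1}$ and using $R\tilde R=\tilde R R=I$ converts this inequality into
\[
RA+A^TR-\gamma RBB^TR+2Q<0,
\]
which is the strict form of the Riccati inequality required by Theorem~\ref{theorem1}. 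Reading off $K_u=B^TR=B^T\tilde R^{-1}$ and $K_w=RBB^TR=\tilde R^{-1}BB^T\tilde R^{-1}$, Theorem~\ref{theorem1} then directly yields adaptive guaranteed-performance consensualization.

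Next I would specialize the cost $J^*=J^*_{x(0)}+J^*_{x(t)}$ supplied by Theorem~\ref{theorem1}. Since $I_N-\tfrac{1}{N}\mathbf{1}_N\mathbf{1}_N^T=L_N$ and $U_{\sigma(\cdot)}^TL_NU_{\sigma(\cdot)}=\mathrm{diag}\{0,I_{N-1}\}$ (the identity already exploited in the derivation of Theorem~\ref{theorem1}), both quadratic forms in $x(\cdot)$ collapse into sums over the disagreement modes, giving $J^*_{x(0)}=\sum_{i=2}^N\tilde x_i^T(0)R\tilde x_i(0)$ and $J^*_{x(t)}=\gamma\sum_{i=2}^N\int_0^{+\infty}\tilde x_i^T(t)RBB^TR\tilde x_i(t)\,\mathrm{d}t$. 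Applying $R\le\varepsilon I$ to the initial-value sum gives $\tilde x_i^T(0)R\tilde x_i(0)\le\varepsilon\|\tilde x_i(0)\|^2$, matching the first term of the stated cost. For the integrand I would then combine $R\le\varepsilon I$ with $\lambda_{\max}(BB^T)\le 1$ to bound $\tilde x_i^TRBB^TR\tilde x_i$ by $\varepsilon^2\|B^T\tilde x_i\|^2$, producing the second term.

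The Schur/congruence reduction and the identification of $J^*_{x(0)}$ are routine; the main obstacle is the final matrix-norm estimate, where $\tilde x_i^TRBB^TR\tilde x_i=\|B^TR\tilde x_i\|^2$ must be passed to $\varepsilon^2\|B^T\tilde x_i\|^2$ by exploiting both $R\le\varepsilon I$ and $\lambda_{\max}(BB^T)\le 1$ simultaneously so that the $B^T$ factor persists on the right-hand side. Once that inequality is in hand, summation over $i=2,\ldots,N$ and integration in $t$ immediately package the compact cost expression claimed in the corollary, completing the proof.
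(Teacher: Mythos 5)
Your overall route is exactly the paper's intended one: the paper offers no detailed proof of Corollary~\ref{corollary1} beyond invoking the Schur complement lemma and the remark preceding the corollary, and your reduction (Schur complement of the $-2Q$ block to get $A\tilde R+\tilde RA^T-\gamma BB^T+2\tilde RQ\tilde R<0$, congruence by $R=\tilde R^{-1}$ to recover the strict Riccati inequality of Theorem~\ref{theorem1}, then $R\le\varepsilon I$ from $\tilde R\ge\varepsilon^{-1}I$, with $K_u=B^T\tilde R^{-1}$ and $K_w=\tilde R^{-1}BB^T\tilde R^{-1}$) is precisely that argument, and the identification $J^*_{x(0)}=\sum_{i=2}^N\tilde x_i^T(0)R\tilde x_i(0)$ is correct.

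However, the step you yourself flag as the ``main obstacle'' is left unresolved, and it is in fact the weak point: the bound $\tilde x_i^TRBB^TR\tilde x_i\le\varepsilon^2\|B^T\tilde x_i\|^2$ is equivalent to the matrix inequality $RBB^TR\le\varepsilon^2BB^T$, which the paper asserts (in the paragraph before Corollary~\ref{corollary1}) but which does \emph{not} follow from $R\le\varepsilon I$ and $\lambda_{\max}(BB^T)\le1$ alone. Indeed, $RBB^TR\le\varepsilon^2BB^T$ forces $\ker(B^T)\subseteq\ker(B^TR)$, i.e. $\mathrm{range}(RB)\subseteq\mathrm{range}(B)$, which fails for generic $R$ when $BB^T$ is rank deficient: with $B=e_1$ (so $\lambda_{\max}(BB^T)=1$), any positive definite $R\le\varepsilon I$ with $R_{12}\ne0$ and $x=e_2$ gives $B^Tx=0$ but $B^TRx=R_{12}\ne0$, so the quadratic-form inequality is violated. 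What the hypotheses do give is $\|B^TR\tilde x_i\|^2\le\varepsilon^2\|\tilde x_i\|^2$ (since $\|B^T\|\le1$ and $\|R\|\le\varepsilon$), so a rigorous version of the corollary's cost would read $\gamma\varepsilon^2\int_0^{+\infty}\|\tilde x_i(t)\|^2{\rm d}t$ in the second term, or one must add structure such as $R$ commuting with $BB^T$ to keep the $B^T$ factor. In short, your proposal reproduces the paper's argument faithfully, including this unproved estimate; as written, neither your sketch nor the paper actually closes that final inequality.
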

\vspace{6pt}
\begin{rmk}\label{remark2}
In \cite{c24}, guaranteed-performance consensus for multiagent systems with time-varying neighbor sets and time-invariant interaction strengths was investigated, where the minimum and maximum nonzero eigenvalues of the Laplacian matrices of all interaction topologies in the switching set are required to design gain matrices of consensus protocols. It should be pointed out that global structure information of interaction topologies of the whole system is required to determine the precise values of the minimum and maximum nonzero eigenvalues. Moreover, their methods cannot deal with time-varying interaction strength cases. By the translation-adaptive strategy, the impacts of both the minimum and maximum nonzero eigenvalues and time-varying interaction strengths are eliminated in Theorem 1 and Corollary 1, and a completely distributed guaranteed-performance consensus control is realized in the sense that consensualization criteria are independent of the Laplacian matrices of interaction topologies in the switching set and their eigenvalues.
\end{rmk}

\begin{rmk}\label{remark3}
The scaling strategy was applied to realize adaptive consensus control in \cite{c28,c29,c30}, where the impacts of switching topologies were not investigated. The scaling factor in the Lyapunov function is inversely proportional to the minimum nonzero eigenvalue of the Laplacian matrix, so this factor is difficult to be determined and may be very large since the minimum nonzero eigenvalue may be very small. Thus, the consensus regulation performance cannot be guaranteed since the scaling factor in the Lyapunov function cannot be eliminated when dealing with guaranteed-performance constraints. By translating all nonzero eigenvalues of the Laplacian matrices in the switching set rightward instead of the scaling factor, the guaranteed-performance constraints can be dealt with and the guaranteed-performance cost consists of two terms: the initial state term $J_{x(0)}^*$ and the state integral term $J_{x(t)}^*$. The guaranteed-performance cost only contain the initial state term in \cite{c23,c24,c25,c26,c27}, where the adaptive consensus strategy was not applied. Actually, the state integral term is introduced since the interaction strengths are adaptively adjusted.
\end{rmk}

\section{Extensions to Lipschitz nonlinear cases }\label{section4}
This section extends adaptive guaranteed-performance consensualization criteria for linear multiagent systems shown in the above section to multiagent systems with each agent containing the Lipschitz nonlinearity.\par

The dynamics of each agent is modeled as
\begin{eqnarray}\label{22}
{\dot x_i}(t) = A{x_i}(t) + f({x_i}(t)) + B{u_i}(t){\rm{ }}(i \in \{ 1,2, \cdots ,N\} ),
\end{eqnarray}
where the nonlinear function $f:{\mathbb{R}^d} \times \left[ {0, + \infty } \right) \to {\mathbb{R}^d}$ is continuous and differentiable and satisfies the Lipschitz condition $\left\| {f({x_i}(t)) - f({x_k}(t))} \right\| \le \mu \left\| {{x_i}(t) - {x_k}(t)} \right\|$ with the Lipschitz constant $\mu  > 0$, and all the other notations are identical with the ones in (\ref{1}). Let $F(x(t)) = {\left[ {{f^T}\left( {{x_1}(t)} \right),{f^T}\left( {{x_2}(t)} \right), \cdots ,{f^T}\left( {{x_N}(t)} \right)} \right]^T}{\rm{ }}{\rm{.}}$ By the similar analysis in the above section, the dynamics of multiagent system (\ref{22}) with protocol (\ref{2}) can be transformed into
\begin{eqnarray}\label{23}
{\dot {\tilde x}_1}(t) = A{\tilde x_1}(t) + \left( {\frac{1}{{\sqrt N }}{\bf{1}}_N^T \otimes {I_d}} \right)F(x(t)),
\end{eqnarray}
\[
\hspace{-1em}\dot \zeta (t) = \left( {{I_{N - 1}} \otimes A - \tilde U_{\sigma (t)}^T{L_{\sigma (t),w}}{{\tilde U}_{\sigma (t)}} \otimes B{K_u}} \right)\zeta (t)\vspace{-12pt}
\]
\begin{eqnarray}\label{24}
\hspace{-5em}+ \left( {\tilde U_{\sigma (t)}^T \otimes {I_d}} \right)F(x(t)),
\end{eqnarray}
where subsystem (\ref{24}) describes the disagreement dynamics of multiagent system (\ref{22}). \par

By the Lipschitz condition and the structure feature of the transformation matrix ${\tilde U_{\sigma (t)}} \otimes {I_d}$, the following theorem linearizes the impacts of the nonlinear term $\left( {\tilde U_{\sigma (t)}^T \otimes {I_d}} \right)F(x(t))$ and gives an adaptive guaranteed-performance consensualization criterion in the completely distributed manner; that is, it is not associated with Laplacian matrices of interaction topologies in the switching set and their eigenvalues. \par

\begin{thm}\label{theorem2}
For any given $\gamma  > 0$, multiagent system (\ref{22}) is adaptively guaranteed-performance consensualizable by protocol (\ref{2}) if there exists a matrix ${P^T} = P > 0$ such that
\[
PA + {A^T}P - P(\gamma B{B^T} - I)P + 2Q + {\mu ^2}I \le 0.
\]
In this case, ${K_u} = {B^T}P$, ${K_w} = PB{B^T}P$ and the guaranteed-performance cost satisfies that ${J^ * } = J_{x(0)}^* + J_{x(t)}^*,$ where
\[
J_{x(0)}^* = {x^T}(0)\left( {\left( {{I_N} - \frac{1}{N}{{\bf{1}}_N}{\bf{1}}_N^T} \right) \otimes P} \right)x(0),
\]
\[
J_{x(t)}^*{\rm{ = }}\gamma \int_0^{ + \infty }  \hspace{-2pt}{{x^T}(t)\left( {\left( {{I_N}  \hspace{-2pt}- \hspace{-2pt} \frac{1}{N}{{\bf{1}}_N}{\bf{1}}_N^T} \right) \hspace{-2pt} \otimes \hspace{-2pt}PB{B^T}P} \right)x(t)} {\rm{d}}t.
\] \hspace{4pt}
\end{thm}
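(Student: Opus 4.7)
The plan is to mirror the proof of Theorem \ref{theorem1} almost step by step, using the same Lyapunov function candidate as in (\ref{10}) but with $R$ replaced by $P$, and absorbing the new nonlinear term $\bigl(\tilde U_{\sigma(t)}^T\otimes I_d\bigr)F(x(t))$ in (\ref{24}) via Young's inequality combined with the Lipschitz condition. The consensus subsystem (\ref{23}) is irrelevant for the argument since the consensus goal is still $\lim_{t\to+\infty}\zeta(t)={\bf 0}$.

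First I would differentiate $V(t)$ along the disagreement dynamics (\ref{24}). The linear part reproduces exactly (\ref{11})--(\ref{13}) (with $R\to P$, $K_u=B^TP$, $K_w=PBB^TP$), while the nonlinear term contributes an extra cross term of the form $2\zeta^T(t)(I_{N-1}\otimes P)(\tilde U_{\sigma(t)}^T\otimes I_d)F(x(t))$. I would bound this via the elementary inequality $2a^Tb\le a^Ta+b^Tb$ applied to $a=(I_{N-1}\otimes P)\zeta(t)$ and $b=(\tilde U_{\sigma(t)}^T\otimes I_d)F(x(t))$, which yields
\[
2\zeta^T(t)(I_{N-1}\otimes P)(\tilde U_{\sigma(t)}^T\otimes I_d)F(x(t))\le \zeta^T(t)(I_{N-1}\otimes P^2)\zeta(t)+F^T(x(t))(\tilde U_{\sigma(t)}\tilde U_{\sigma(t)}^T\otimes I_d)F(x(t)).
\]
The key observation, already used in the linear proof, is that $\tilde U_{\sigma(t)}\tilde U_{\sigma(t)}^T=L_N=I_N-\tfrac{1}{N}{\bf 1}_N{\bf 1}_N^T$, so this residual term is topology-free. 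This is where the translation-adaptive strategy pays off a second time.

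Next, I would invoke the identity $F^T(x)(L_N\otimes I_d)F(x)=\tfrac{1}{2N}\sum_{i,k}\|f(x_i)-f(x_k)\|^2$ and apply the Lipschitz bound $\|f(x_i)-f(x_k)\|\le\mu\|x_i-x_k\|$ componentwise to get
\[
F^T(x(t))(L_N\otimes I_d)F(x(t))\le \mu^2 x^T(t)(L_N\otimes I_d)x(t)=\mu^2\zeta^T(t)\zeta(t),
\]
where the last equality uses $U_{\sigma(t)}^TL_NU_{\sigma(t)}={\rm diag}\{0,I_{N-1}\}$ as in (\ref{15}). Substituting back, I obtain
\[
\dot V(t)\le\sum_{i=2}^N\tilde x_i^T(t)\bigl(PA+A^TP-2(\lambda_{\sigma(t),i}+\gamma)PBB^TP+P^2+\mu^2 I\bigr)\tilde x_i(t).
\]
Using $\lambda_{\sigma(t),i}>0$ and the hypothesis $PA+A^TP-P(\gamma BB^T-I)P+2Q+\mu^2 I\le 0$ (i.e.\ $PA+A^TP-\gamma PBB^TP+P^2+2Q+\mu^2 I\le 0$), the argument for $\dot V(t)\le 0$ and $\lim_{t\to+\infty}\zeta(t)={\bf 0}$ goes through exactly as before.

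The cost computation then copies (\ref{14})--(\ref{21}) verbatim: the condition is arranged so that $PA+A^TP+P^2+\mu^2 I+2Q\le\gamma PBB^TP$, which combined with the $-2(\lambda_{\sigma(t),i}+\gamma)PBB^TP$ term in $\dot V$ yields the clean bound $J_{xh}\le -\gamma\sum_{i=2}^N\int_0^h\tilde x_i^T RBB^TR\tilde x_i\,dt+V(0)-V(h)$ (with $R\to P$), and letting $h\to+\infty$ produces $J^*=J_{x(0)}^*+J_{x(t)}^*$ with the stated expressions. The main obstacle is really the Young's inequality bookkeeping on the cross term: picking the weight $I$ (instead of a scaled $\epsilon I$) is what forces the extra $P^2$ inside the Riccati-type inequality and the extra $\mu^2 I$ from the Lipschitz estimate, and it is important to verify that this modification still leaves room for the $-\gamma PBB^TP$ slack needed for the cost bound.
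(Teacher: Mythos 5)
Your proposal matches the paper's own proof essentially step for step: the same Lyapunov function with $R$ replaced by $P$, the same Young-type bound $2\zeta^T(t)\left(\tilde U_{\sigma(t)}^T\otimes P\right)F(x(t))\le \zeta^T(t)\left(I_{N-1}\otimes P^2\right)\zeta(t)+F^T(x(t))\left(L_N\otimes I_d\right)F(x(t))$, and the same use of $\tilde U_{\sigma(t)}\tilde U_{\sigma(t)}^T=L_N$ with the Lipschitz condition to reduce the nonlinear residual to $\mu^2\sum_{i=2}^N\tilde x_i^T(t)\tilde x_i(t)$, after which the cost bound is carried over from Theorem \ref{theorem1} exactly as you describe. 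The argument is correct and no gap is apparent.
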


\begin{proof}
Due to ${\tilde U_{\sigma (t)}}\tilde U_{\sigma (t)}^T = {L_N}$, it can be shown that
\[
\hspace{-9em}{F^T}(x(t))\left( {{{\tilde U}_{\sigma (t)}}\tilde U_{\sigma (t)}^T \otimes {I_d}} \right){\rm{ }}F(x(t)) \vspace{-6pt}
\]
\[
\hspace{6em} = \frac{1}{{2N}}\sum\limits_{i = 1}^N {\sum\limits_{k = 1}^N {{{\left\| {f({x_i}(t)) - f({x_k}(t))} \right\|}^2}} }.
\]
By the Lipschitz condition, one can see that
\[
\frac{1}{{2N}}\hspace{-2pt}\sum\limits_{i = 1}^N \hspace{-3pt}{\sum\limits_{k = 1}^N {{{\left\| \hspace{-1pt}{f({x_i}(t)\hspace{-1pt}) \hspace{-3pt}-\hspace{-3pt} f({x_k}(t)\hspace{-1pt})}\hspace{-1pt} \right\|}^2}} }  \hspace{-3pt}\le\hspace{-2pt} \frac{{{\mu ^2}}}{{2N}}\hspace{-2pt}\sum\limits_{i = 1}^N\hspace{-3pt} {\sum\limits_{k = 1}^N {{{\left\| \hspace{-1pt}{{x_i}(t) \hspace{-3pt}- \hspace{-3pt}{x_k}(t)}\hspace{-1pt} \right\|}^2}} }\vspace{0pt}
\]
\[
\hspace{5em} = \hspace{-2pt}{\mu ^2}{x^T}(t)\hspace{-2pt}\left( {{{\tilde U}_{\sigma (t)}}\tilde U_{\sigma (t)}^T\hspace{-2pt} \otimes\hspace{-2pt} {I_d}} \right)\hspace{-2pt}{\rm{ }}x(t) \vspace{-10pt}
\]
\begin{eqnarray}\label{25}
\hspace{2.5em} = {\mu ^2}\sum\limits_{i = 2}^N {\tilde x_i^T(t){{\tilde x}_i}(t)} .
\end{eqnarray}
It can be derived that
\[
\hspace{-4em}2{\zeta ^T}(t)\left( {\tilde U_{\sigma (t)}^T \otimes P} \right)F\left( {x(t)} \right)  \le \sum\limits_{i = 2}^N {\tilde x_i^T(t)PP{{\tilde x}_i}(t)} \vspace{-10pt}
\]
\begin{eqnarray}\label{26}
\hspace{6em}+ {F^T}\left( {x(t)} \right)\left( {{{\tilde U}_{\sigma (t)}}\tilde U_{\sigma (t)}^T \otimes {I_d}} \right){\rm{ }}F\left( {x(t)} \right).
\end{eqnarray}
By constructing a similar Lyapunov function in (\ref{10}) with $R$ replacing by $P$, from (\ref{25}) and (\ref{26}), the conclusion of Theorem 2 can be obtained.
\end{proof}

For any given gain factor $\varepsilon  > 0$ with $P \le \varepsilon I$, the adaptive guaranteed-cost consensualization can be realized by choosing proper $\gamma $ and $P$. According to Theorem 2 and Schur complement lemma, the following corollary presents an approach to determine gain matrices of consensus protocols with a given gain factor in terms of LMIs.\par

\begin{col}\label{corollary2}
For any given $\varepsilon  > 0$, multiagent system (\ref{22}) is adaptively guaranteed-cost consensualizable by protocol (\ref{2}) if ${\lambda _{\max }}(B{B^T}) \le 1$ and there exist $\gamma  > 0$ and ${\tilde P^T} = \tilde P \ge {\varepsilon ^{ - 1}}I$ such that
\[
\hat \Xi  = \left[ {\begin{array}{*{20}{c}}
   {A\tilde P + \tilde P{A^T} - \gamma B{B^T} + I} & {2\tilde PQ} & {\mu \tilde P}  \\
   $*$ & { - 2Q} & 0  \\
   $*$ & $*$ & { - I}  \\
\end{array}} \right] < 0.
\]
In this case, ${K_u} = {B^T}{\tilde P^{ - 1}}$, ${K_w} = {\tilde P^{ - 1}}B{B^T}{\tilde P^{ - 1}}$ and the guaranteed-performance cost satisfies that
\[{
J^ * } = \sum\limits_{i = 2}^N {\left( {\varepsilon {{\left\| {{{\tilde x}_i}(0)} \right\|}^2} + \gamma {\varepsilon ^2}\int_0^{ + \infty } {{{\left\| {{B^T}{{\tilde x}_i}(t)} \right\|}^2}} {\rm{d}}t} \right)} .
\]
\vspace{0pt}
\end{col}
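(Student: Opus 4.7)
The plan is to reduce Corollary 2 to Theorem 2 by a congruence transformation followed by two Schur complements, and then to obtain the cost bound from operator-norm estimates on $P$ and $B$.

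First I would set $\tilde{P}=P^{-1}$, so that $P$ plays the role of the matrix supplied by Theorem 2. Assuming for the moment that $\tilde{P}^T=\tilde{P}>0$ is a feasible point of the stated LMI, I would pre- and post-multiply the Riccati inequality
\[
PA+A^T P-P(\gamma BB^T-I)P+2Q+\mu^2 I \le 0
\]
by $\tilde{P}$. Since $\tilde{P}$ is symmetric and positive definite, this congruence is cost-free and produces
\[
A\tilde{P}+\tilde{P}A^T-\gamma BB^T+I+2\tilde{P}Q\tilde{P}+\mu^2\tilde{P}^2<0,
\]
which is the scalar inequality that must be shown equivalent to $\hat{\Xi}<0$. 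To run the argument in the reverse direction (needed because Corollary 2 assumes $\hat{\Xi}<0$ and concludes Theorem 2's hypothesis), I would simply read every implication backward.

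Second I would introduce the two Schur complements. Writing $2\tilde{P}Q\tilde{P}=(2\tilde{P}Q)(2Q)^{-1}(2\tilde{P}Q)^T$, which is legitimate because $Q>0$ is symmetric, and $\mu^2\tilde{P}^2=(\mu\tilde{P})I^{-1}(\mu\tilde{P})^T$, the two quadratic terms in $\tilde{P}$ can be peeled off as off-diagonal blocks with diagonal partners $-2Q$ and $-I$. The resulting $3\times 3$ block inequality is precisely $\hat{\Xi}<0$, and the Schur complement lemma (as invoked in the paper) shows the two forms are equivalent. The side constraint $\tilde{P}\ge\varepsilon^{-1}I$ inverts to $P\le\varepsilon I$, which is the gain-factor assumption $P\le\varepsilon I$ used in Section III. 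The gain expressions $K_u=B^T\tilde{P}^{-1}=B^T P$ and $K_w=\tilde{P}^{-1}BB^T\tilde{P}^{-1}=PBB^T P$ then match those of Theorem 2 verbatim.

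Third I would translate the general cost formula $J^*=J^*_{x(0)}+J^*_{x(t)}$ of Theorem 2 into the coordinates $\tilde{x}_i$. Because $U_{\sigma(0)}^T\bigl(I_N-\tfrac{1}{N}\mathbf{1}_N\mathbf{1}_N^T\bigr)U_{\sigma(0)}=\mathrm{diag}\{0,I_{N-1}\}$, the quadratic form in $x(0)$ collapses to $\sum_{i=2}^{N}\tilde{x}_i^T(0)P\tilde{x}_i(0)$, and analogously for the integrand. Bounding $P\le\varepsilon I$ gives the first term $\le\varepsilon\sum_{i=2}^{N}\|\tilde{x}_i(0)\|^2$, while $\tilde{x}_i^TPBB^TP\tilde{x}_i=\|B^T P\tilde{x}_i\|^2$ is bounded using $\|P\|_{\mathrm{op}}\le\varepsilon$ together with $\lambda_{\max}(BB^T)\le 1$, producing the $\gamma\varepsilon^2\int_0^{+\infty}\|B^T\tilde{x}_i(t)\|^2\,\mathrm{d}t$ contribution claimed in the statement.

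The only piece requiring genuine care is the Schur complement step, because two auxiliary blocks must be introduced simultaneously and one has to verify that the off-diagonal sign conventions line up with the paper's $\begin{bmatrix}\cdot&\cdot&\cdot\\\ast&\cdot&0\\\ast&\ast&\cdot\end{bmatrix}$ layout; once that bookkeeping is done, everything else is mechanical substitution from Theorem 2.
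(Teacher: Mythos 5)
Your proposal is correct and follows essentially the same route the paper intends for Corollary \ref{corollary2}: a congruence by $\tilde P=P^{-1}$ plus a Schur complement on the lower-right block $\mathrm{diag}(-2Q,-I)$ to recover the Riccati condition of Theorem \ref{theorem2}, with $K_u$, $K_w$ and the cost then read off from that theorem and bounded via $\tilde P\ge\varepsilon^{-1}I\Leftrightarrow P\le\varepsilon I$. Note only that your last bounding step, $\left\|B^TP\tilde x_i\right\|^2\le\varepsilon^2\left\|B^T\tilde x_i\right\|^2$, is exactly the paper's own asserted implication ($P\le\varepsilon I$ and $\lambda_{\max}(BB^T)\le1$ giving $PBB^TP\le\varepsilon^2BB^T$), so you are no less rigorous than the source at that point.
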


We adopt two critical approaches to deduce our main conclusions: the variable changing approach and the Riccati inequality approach. It should be pointed out that the variable changing approach is an equivalent transformation, so it does not introduce any conservatism. However, since there exists some scalability of the Lyapunov function, the Riccati inequality approach may bring in some conservatism. Actually, the Riccati inequality approach is extensively applied in optimization control and often has less conservatism as shown in \cite{c31}. Moreover, two key difficulties exist in obtaining the main result of the current paper shown in Theorems \ref{theorem1} and \ref{theorem2}. The first one is to design a proper Lyapunov function, which can be used to rightward translate the nonzero eigenvalues of the Laplacian matrix. The second one is to determine the relationship between the Laplacian matrix and the linear quadratic index, as given in (\ref{14}) and (\ref{15}).

\begin{rmk}\label{remark4}
Many multiagent systems contain Lipschitz nonlinear dynamics. For example, sinusoidal terms are globally Lipschitz, which are usually encountered in cooperative control for multiple robotics and cooperative guidance for multiple unmanned vehicles as shown in \cite{c33} and \cite{c34}. The key difficulties contain two aspects: how to decompose the disagreement components from the whole $F(x(t))$ and how to eliminate the impacts of the time-varying transformation matrix ${\tilde U_{\sigma (t)}} \otimes {I_d}$. Because ${\tilde U_{\sigma (t)}}\tilde U_{\sigma (t)}^T$ is the Laplacian matrix of a complete graph with the weights of all the edges ${1 \mathord{\left/ {\vphantom {1 N}} \right. \kern-\nulldelimiterspace} N}$, these two key challenges can be dealt with by using this special structure characteristic. Moreover, since the consensus regulation performance and adaptively adjusting interaction weights are considered, the approaches to deal with Lipschitz nonlinear dynamics in \cite{c16,c17,c18} are no longer valid.
\end{rmk}

\section{Numerical simulations}\label{section5}
In this section, two simulation examples are provided to demonstrate the theoretical results obtained in the previous sections. \par

\vspace{1em}
\centerline {***** Put Fig. 1 about here *****}
\vspace{1em}

\begin{exm}[Linear cases]
Consider a four-order linear multiagent system composed of six agents with switching interaction topologies ${G_1}$, ${G_2}$, ${G_3}$ and ${G_4}$ given in Fig. 1. The dynamics of each agent is shown in (\ref{1}) with
\[A \! = \! \left[ {\begin{array}{*{20}{c}}
   {{\rm{ - 3}}{\rm{.375}}} & {{\rm{ - 4}}{\rm{.500}}} & {{\rm{ - 4}}{\rm{.125}}} & {{\rm{ - 3}}{\rm{.250}}}  \\
   {{\rm{1}}{\rm{.625}}} & {{\rm{ - 1}}{\rm{.500}}} & {{\rm{ - 1}}{\rm{.125}}} & {{\rm{1}}{\rm{.250}}}  \\
   {{\rm{ - 0}}{\rm{.875}}} & {{\rm{0}}{\rm{.500}}} & {{\rm{ - 1}}{\rm{.625}}} & {{\rm{1}}{\rm{.750}}}  \\
   {{\rm{1}}{\rm{.750}}} & {{\rm{3}}{\rm{.500}}} & {{\rm{2}}{\rm{.750}}} & {{\rm{ - 0}}{\rm{.500}}}  \\
\end{array}} \right] \! \! ,B \! = \! \left[ {\begin{array}{*{20}{c}}
   0  \\
   {1.5}  \\
   0  \\
   0  \\
\end{array}} \right]\! \! .\]
The initial state of each agent is
\[\begin{array}{l}
 {x_1}\left( 0 \right) = {\left[ { - 1,0, - 4,5} \right]^T},~~{x_2}\left( 0 \right) = {\left[ {5, - 4, - 8, - 2} \right]^T}, \\
 {x_3}\left( 0 \right) = {\left[ { - 7,3, - 4,7} \right]^T},~~{x_4}\left( 0 \right) = {\left[ { - 2,7, - 1, - 5} \right]^T}, \\
 {x_5}\left( 0 \right) = {\left[ {4,6, - 1, - 3} \right]^T},~~{x_6}\left( 0 \right) = {\left[ {8,1,5, - 4} \right]^T}. \\
 \end{array}\]

The parameters are chosen as $\gamma  = 5$ and
\[
Q = \left[ {\begin{array}{*{20}{c}}
   {0.10} & {0.02} & {0.01} & 0  \\
   {0.02} & {0.10} & {0.01} & {0.02}  \\
   {0.01} & {0.01} & {0.10} & {0.03}  \\
   0 & {0.02} & {0.03} & {0.10}  \\
\end{array}} \right],
\]
then one can obtain gain matrices according to Theorem \ref{theorem1} as follows
\[
{K_u} = \left[ {\begin{array}{*{20}{c}}
   {{\rm{0}}{\rm{.2}}653} & {{\rm{1}}{\rm{.0549}}} & {0.7878} & {0.6790}  \\
\end{array}} \right],
\]
\[
{K_w} = \left[ {\begin{array}{*{20}{c}}
   {0.0704} & {0.2799} & {0.2090} & {0.1801}  \\
   {0.2799} & {1.1128} & {0.8311} & {0.7163}  \\
   {0.2090} & {0.8311} & {0.6207} & {0.5349}  \\
   {0.1801} & {0.7163} & {0.5349} & {0.4610}  \\
\end{array}} \right].
\]
In this case, the guaranteed-performance cost is ${J^ * } = {\rm{267}}{\rm{.9357}}$. \par

As illustrated in Fig. 2, let ${G_{\sigma (t)}}$ randomly switch among ${G_1}$, ${G_2}$, ${G_3}$ and ${G_4}$ with switching interval $0.5$s. Figs. 3 and 4 depict the guaranteed-performance function ${J_{x\left( t \right)}}$ and the trajectories of the states of all agents ${x_i}(t)$ $\left( {i = 1,2, \cdots ,N} \right)$, respectively. One can see that multiagent system (1) achieves adaptive consensus and the guaranteed-performance function ${J_{x\left( t \right)}}$ converges to a finite value with ${J_{x\left( t \right)}} < {J^ * }$. The simulation results illustrate that multiagent system (1) can be adaptively guaranteed-performance consensualizable by protocol (2) with the above gain matrices ${K_u}$ and ${K_w}$ obtained by Theorem \ref{theorem1} without using the global information of the interaction topology. However, the distributed consensus control approach in \cite{c35} required the precise value of the minimum nonzero eigenvalue of the interaction topology; that is, the completely distributed control cannot be realized. Moreover, the main conclusion of Theorem 1 is completely distributed, so it should be pointed out that the computational complexity does not increase as the number of agents increases.\par

\vspace{1em}
\centerline {***** Put Fig. 2 about here *****}
\vspace{1em}

\vspace{1em}
\centerline {***** Put Fig. 3 about here *****}
\vspace{1em}

\vspace{1em}
\centerline {***** Put Fig. 4 about here *****}
\vspace{1em}
\end{exm}

\begin{exm}[Lipschitz nonlinear cases]
Consider a four-order Lipschitz nonlinear multiagent system with six agents and the dynamics of each agent is described by (\ref{22}) with
\[\begin{array}{l}
 A = \left[ {\begin{array}{*{20}{c}}
   {{\rm{ - 3}}{\rm{.125}}} & {{\rm{ - 5}}{\rm{.250}}} & {{\rm{ - 4}}{\rm{.625}}} & {{\rm{ - 4}}{\rm{.250}}}  \\
   {{\rm{0}}{\rm{.875}}} & {{\rm{ - 2}}{\rm{.250}}} & {{\rm{ - 2}}{\rm{.625}}} & {{\rm{ - 0}}{\rm{.750}}}  \\
   {{\rm{ - 0}}{\rm{.625}}} & {{\rm{1}}{\rm{.750}}} & {{\rm{ - 0}}{\rm{.125}}} & {{\rm{2}}{\rm{.750}}}  \\
   {{\rm{2}}{\rm{.250}}} & {{\rm{3}}{\rm{.000}}} & {{\rm{2}}{\rm{.750}}} & {{\rm{0}}{\rm{.500}}}  \\
\end{array}} \right], \\
 B = \left[ {\begin{array}{*{20}{c}}
   0  \\
   1  \\
   0  \\
   0  \\
\end{array}} \right],f\left( {{x_i}} \right) = \left[ {\begin{array}{*{20}{c}}
   0  \\
   0  \\
   0  \\
   { - \mu \sin \left( {{x_{i3}}} \right)}  \\
\end{array}} \right], \\
 \end{array}\]
where ${x_i} = {\left[ {{x_{i1}},{\rm{ }}{x_{i2}},{\rm{ }}{x_{i3}},{\rm{ }}{x_{i4}}} \right]^T}$ $\left( {i = 1,2, \cdots ,6} \right)$ and $\mu  = 0.0333$. The initial states of all agents are given as
\[\begin{array}{l}
 {x_1}\left( 0 \right) = {\left[ { - 1, - 2, - 3,5} \right]^T},\ \ \; {x_2}\left( 0 \right) = {\left[ { - 0.5,2, - 4,1.6} \right]^T}, \\
 {x_3}\left( 0 \right) = {\left[ {6, - 3,2,3} \right]^T},\qquad \ \, {x_4}\left( 0 \right) = {\left[ { - 2.5,2,3, - 5} \right]^T}, \\
 {x_5}\left( 0 \right) = {\left[ {1.7, - 9,1.5, - 3} \right]^T},{\kern 1pt} {x_6}\left( 0 \right) = {\left[ { - 1,4, - 2, - 6} \right]^T}. \\
 \end{array}\]

Let $\varepsilon  = 5$ and
\[
Q = \left[ {\begin{array}{*{20}{c}}
   {0.20} & {0.02} & {0.01} & 0  \\
   {0.02} & {0.1} & {0.03} & {0.02}  \\
   {0.01} & {0.03} & {0.2} & {0.03}  \\
   0 & {0.02} & {0.03} & {0.10}  \\
\end{array}} \right],
\]
then one can obtain from Corollary 2 that
\[\gamma  = 21.1207,\]
\[
{K_u} = \left[ {\begin{array}{*{20}{c}}
   {{\rm{0}}{\rm{.0989}}} & {{\rm{0}}{\rm{.6246}}} & {0.5940} & {0.4970}  \\
\end{array}} \right],
\]
\[
{K_w} = \left[ {\begin{array}{*{20}{c}}
   {0.0098} & {0.0618} & {0.0587} & {0.0491}  \\
   {0.0618} & {0.3902} & {0.3710} & {0.3104}  \\
   {0.0587} & {0.3710} & {0.3529} & {0.2952}  \\
   {0.0491} & {0.3104} & {0.2952} & {0.2470}  \\
\end{array}} \right],
\]
and the guaranteed-performance cost is that ${J^ * } = {\rm{4}}{\rm{.1478}} \times {\rm{1}}{{\rm{0}}^4}$. \par

Fig. 5 shows the switching signal $\sigma (t)$ and the switching set is also given in Fig. 1. Figs. 6 and 7 show the curves of the guaranteed-performance function and the state trajectories of this multiagent system, respectively. It can be found that the given Lipschitz nonlinear multiagent system (\ref{22}) can be adaptively guaranteed-cost consensualizable by protocol (\ref{2}) with ${J_{x\left( t \right)}} < {J^*}$.

\vspace{1em}
\centerline {***** Put Fig. 5 about here *****}
\vspace{1em}

\vspace{1em}
\centerline {***** Put Fig. 6 about here *****}
\vspace{1em}

\vspace{1em}
\centerline {***** Put Fig. 7 about here *****}
\vspace{1em}

Furthermore, for the case that $\varepsilon  = 10$ and the other parameters are identical, according to Corollary \ref{corollary2}, one can acquire that
\[
{K_u} = \left[ {\begin{array}{*{20}{c}}
   {{\rm{0}}{\rm{.1043}}} & {{\rm{0}}{\rm{.6793}}} & {0.6524} & {0.5448}  \\
\end{array}} \right],
\]
\[
{K_w} = \left[ {\begin{array}{*{20}{c}}
   {0.0109} & {0.0708} & {0.0680} & {0.0568}  \\
   {0.0708} & {0.4615} & {0.4432} & {0.3701}  \\
   {0.0680} & {0.4432} & {0.4257} & {0.3554}  \\
   {0.0568} & {0.3701} & {0.3554} & {0.2968}  \\
\end{array}} \right],
\]
\[
{J^ * } = {\rm{1}}{\rm{5.953}} \times {\rm{1}}{{\rm{0}}^4}.
\]
It can be seen that the gain matrices ${K_u}$, ${K_w}$ and guaranteed-performance cost ${J^ * }$ for $\varepsilon  = 10$ are larger than the ones for $\varepsilon  = 5$, which means that the values of ${K_u}$, ${K_w}$ and ${J^ * }$ become larger as $\varepsilon$ gets larger. Thus, one can obtain different ${K_u}$, ${K_w}$ and ${J^ * }$ to satisfy different requirements by regulating $\varepsilon$ in practical applications.\par
\end{exm}

\section{Conclusions}\label{section6}
A completely distributed guaranteed-performance consensus scheme was proposed to make consensus control gains independent of Laplacian matrices of switching topologies and their eigenvalues. An adaptive guaranteed-performance consensus design criterion for high-order linear multiagent systems with switching topologies was given based on the Riccati inequality, where the impacts of nonzero eigenvalues of Laplacian matrices of switching topologies with adaptively adjusting weights were eliminated by rightward translating them instead of scaling them. Furthermore, by adding constraints on the input matrix of each agent, it was shown that the consensus control gains and the guaranteed-performance cost can be regulated via choosing the different translation factor. Moreover, adaptive guaranteed-performance consensus conclusions for high-order linear multiagent systems were extended to high-order nonlinear ones by the Lipschitz condition and the structure characteristic of the transformation matrix. The further work is to investigate the influences of directed topologies, given cost budgets, and time-varying delays on adaptive guaranteed-performance consensus of multiagent systems with jointly connected switching topologies.

\newpage

\begin{figure}[!htb]
\begin{center}
\scalebox{0.8}[0.8]{\includegraphics{./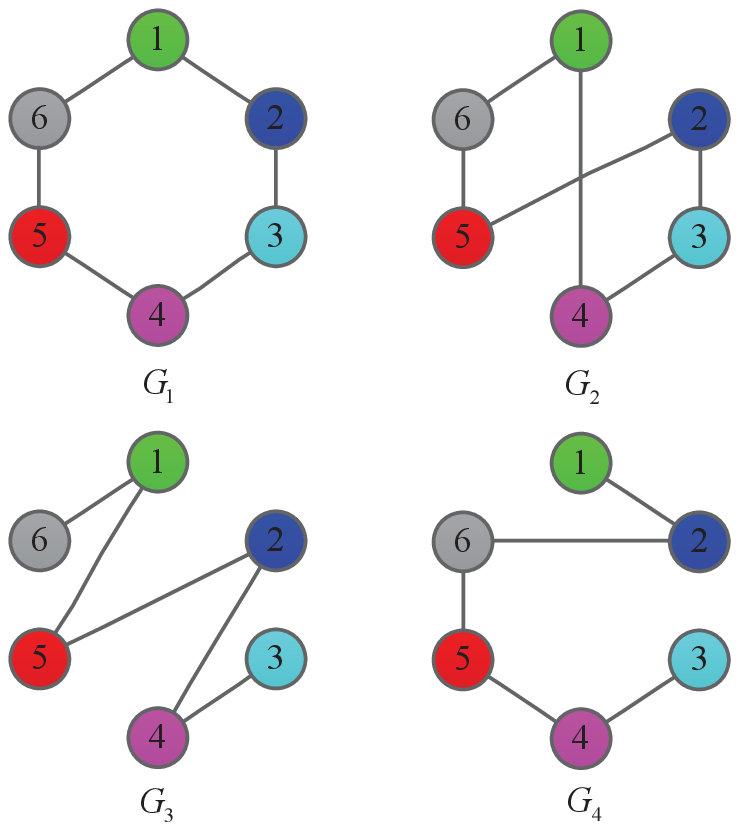}} \vspace{0pt}
\caption{Switching interaction topologies set.}\vspace{-10pt}
\end{center}\vspace{3em}
\end{figure}

\vspace{10pt}
\begin{figure}[!htb]
\begin{center}
\scalebox{0.5}[0.5]{\includegraphics{./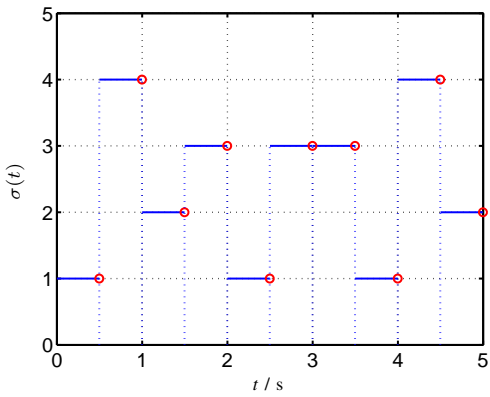}}
\put (-200, 72) {\rotatebox{90} {{\scriptsize $\sigma (t)$}}}
\put (-110, 0) {{ \scriptsize {\it t}~/~s}}
\caption{Switching signal $\sigma (t)$ for linear cases.}
\end{center}\vspace{3em}
\end{figure}

\vspace{10pt}
\begin{figure}[!htb]
\begin{center}
\scalebox{0.5}[0.5]{\includegraphics{./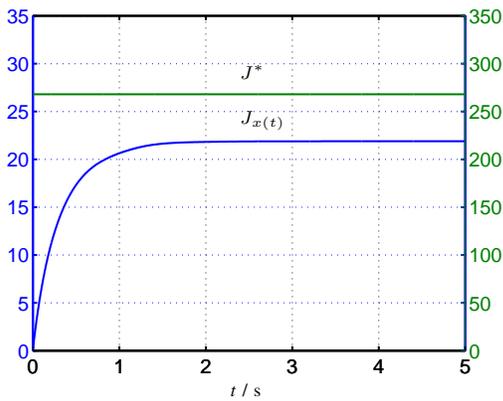}}
\put (-105, 120) {{\scriptsize ${J^ * }$}}
\put (-105, 103) {{\scriptsize ${J_{x\left( t \right)}}$}}
\put (-112, 0) {{ \scriptsize {\it t}~/~s}}
\caption{Guaranteed-performance function for linear cases.}
\end{center}\vspace{-1.5em}
\end{figure}

\vspace{10pt}
\begin{figure}[!htb]
\begin{center}
\scalebox{0.5}[0.5]{\includegraphics{./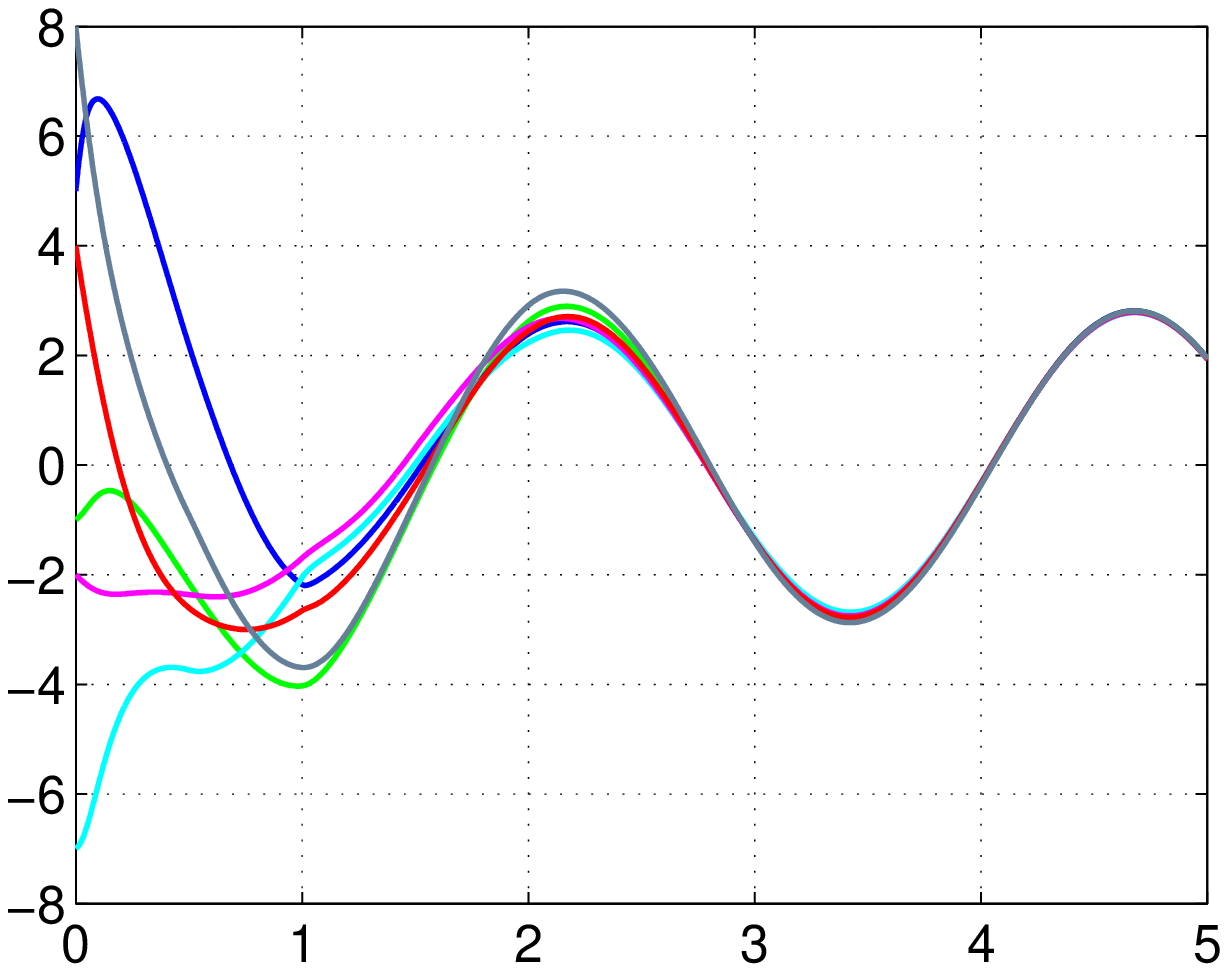}}
\put (-203, 42) {\rotatebox{90} {{\scriptsize $x_{i{\rm 1}}(t)~(i = 1,2, \cdots ,6)$}}}
\put (-110, 0) {{ \scriptsize {\it t}~/~s}}
\end{center}\vspace{-0.7em}
\end{figure}
\begin{figure}[!htb]
\begin{center}
\scalebox{0.5}[0.5]{\includegraphics{./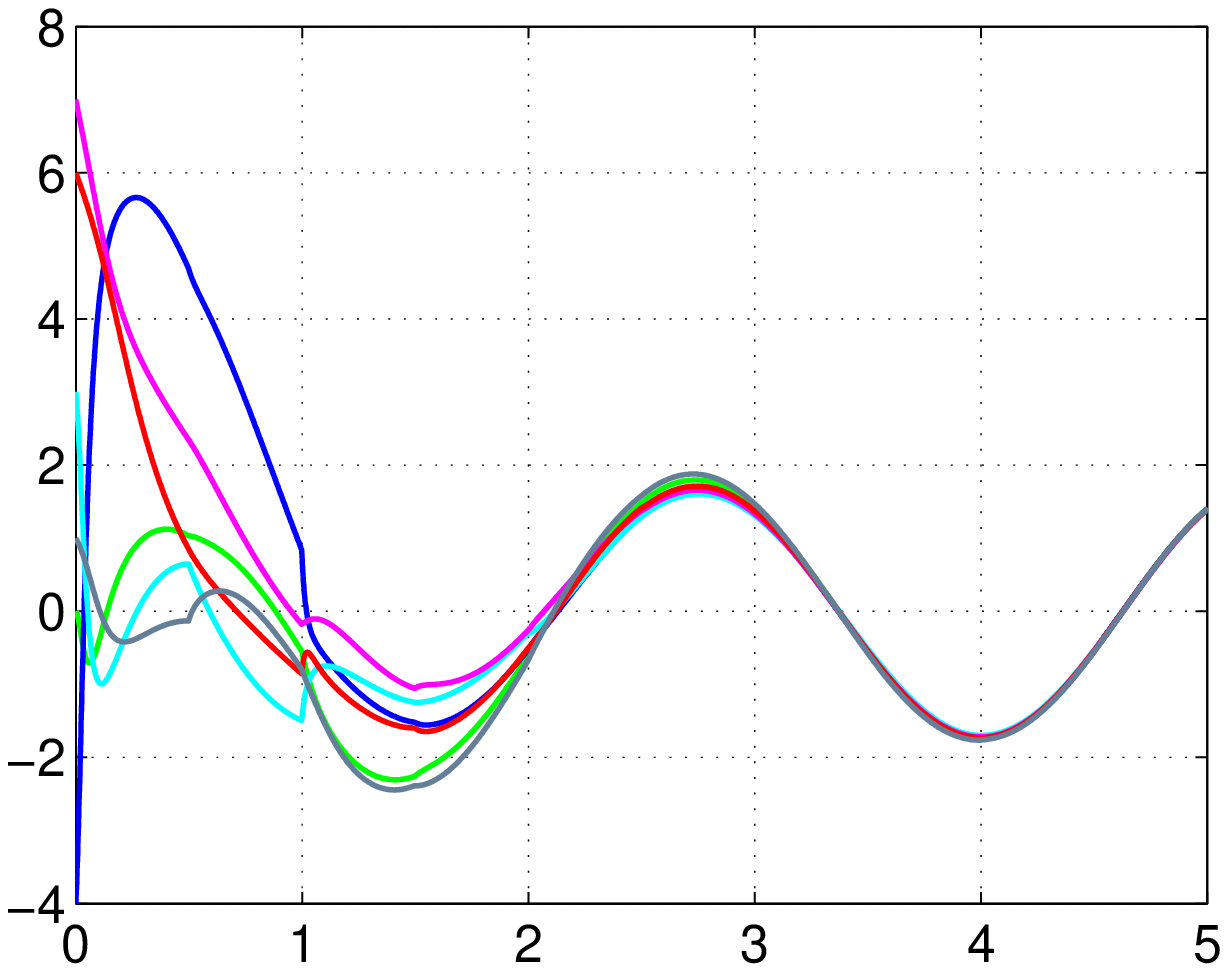}}
\put (-203, 42) {\rotatebox{90} {{\scriptsize $x_{i{\rm 2}}(t)~(i = 1,2, \cdots ,6)$}}}
\put (-110, 0) {{ \scriptsize {\it t}~/~s}}
\end{center}\vspace{-0.7em}
\end{figure}
\begin{figure}[!htb]
\begin{center}
\scalebox{0.5}[0.5]{\includegraphics{./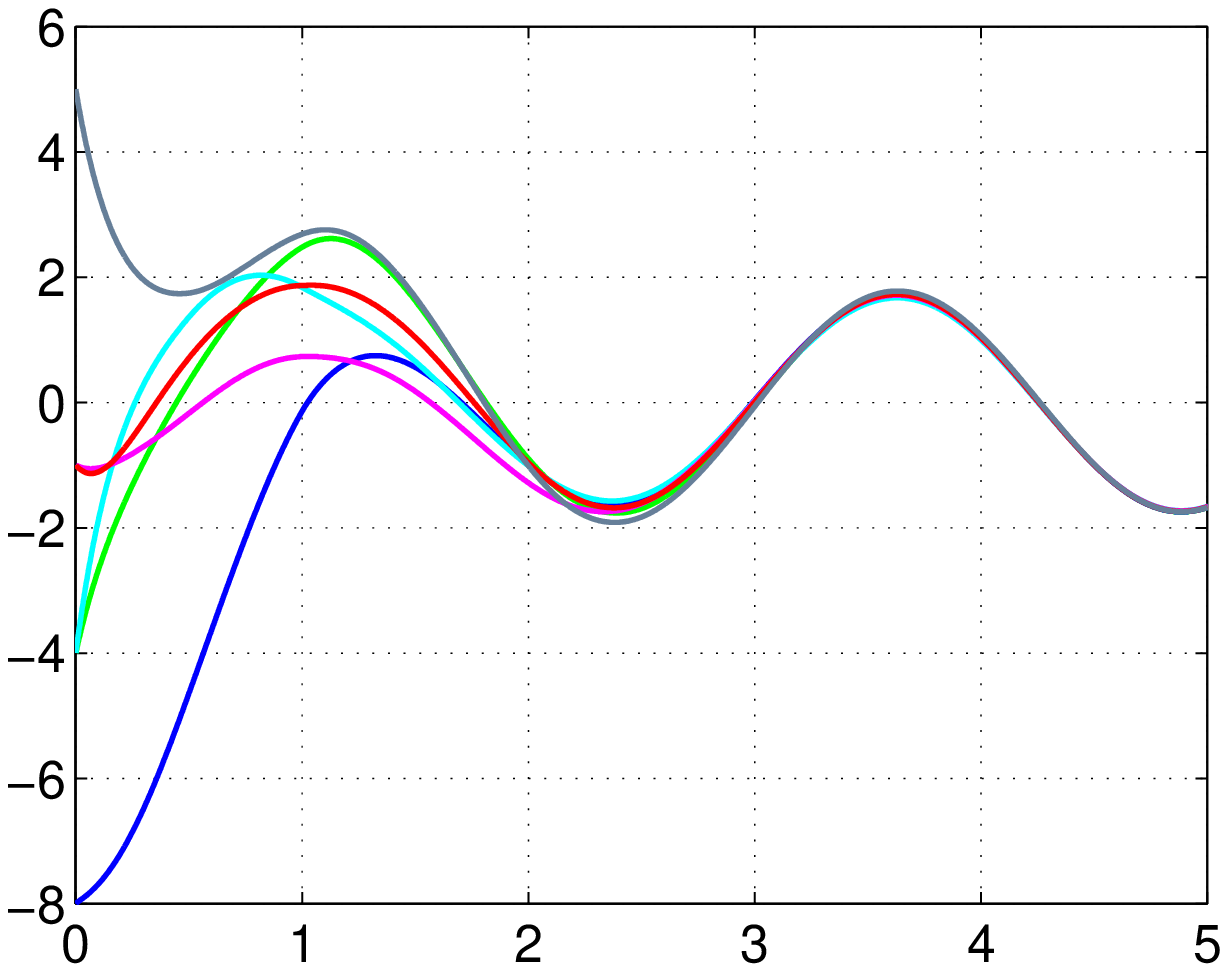}}
\put (-203, 42) {\rotatebox{90} {{\scriptsize $x_{i{\rm 3}}(t)~(i = 1,2, \cdots ,6)$}}}
\put (-110, 0) {{ \scriptsize {\it t}~/~s}}
\end{center}\vspace{-0.7em}
\end{figure}
\begin{figure}[!htb]
\begin{center}
\scalebox{0.5}[0.5]{\includegraphics{./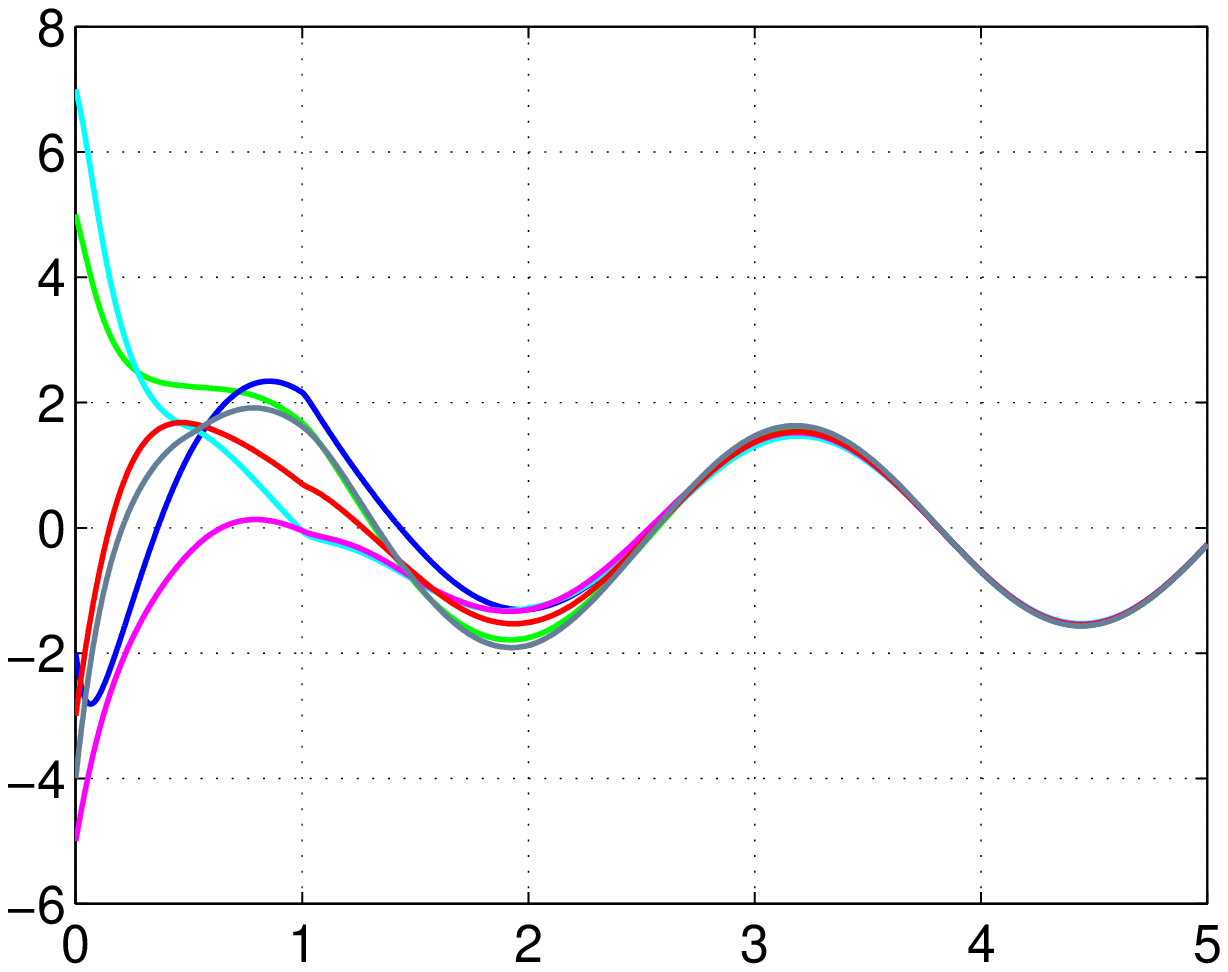}}
\put (-203, 42) {\rotatebox{90} {{\scriptsize $x_{i{\rm 4}}(t)~(i = 1,2, \cdots ,6)$}}}
\put (-110, 0) {{ \scriptsize {\it t}~/~s}}
\caption{State trajectories for linear cases.}
\end{center}
\end{figure}

\begin{figure}[!htb]
\begin{center}
\scalebox{0.5}[0.5]{\includegraphics{./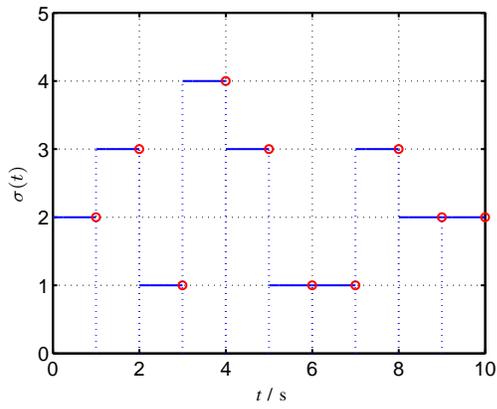}}
\put (-200, 75) {\rotatebox{90} {{\scriptsize $\sigma (t)$}}}
\put (-110, 0) {{ \scriptsize {\it t}~/~s}}
\caption{Switching signal $\sigma (t)$ for Lipschitz nonlinear cases.}
\end{center}\vspace{25em}
\end{figure}

\begin{figure}[!htb]
\begin{center}
\scalebox{0.5}[0.5]{\includegraphics{./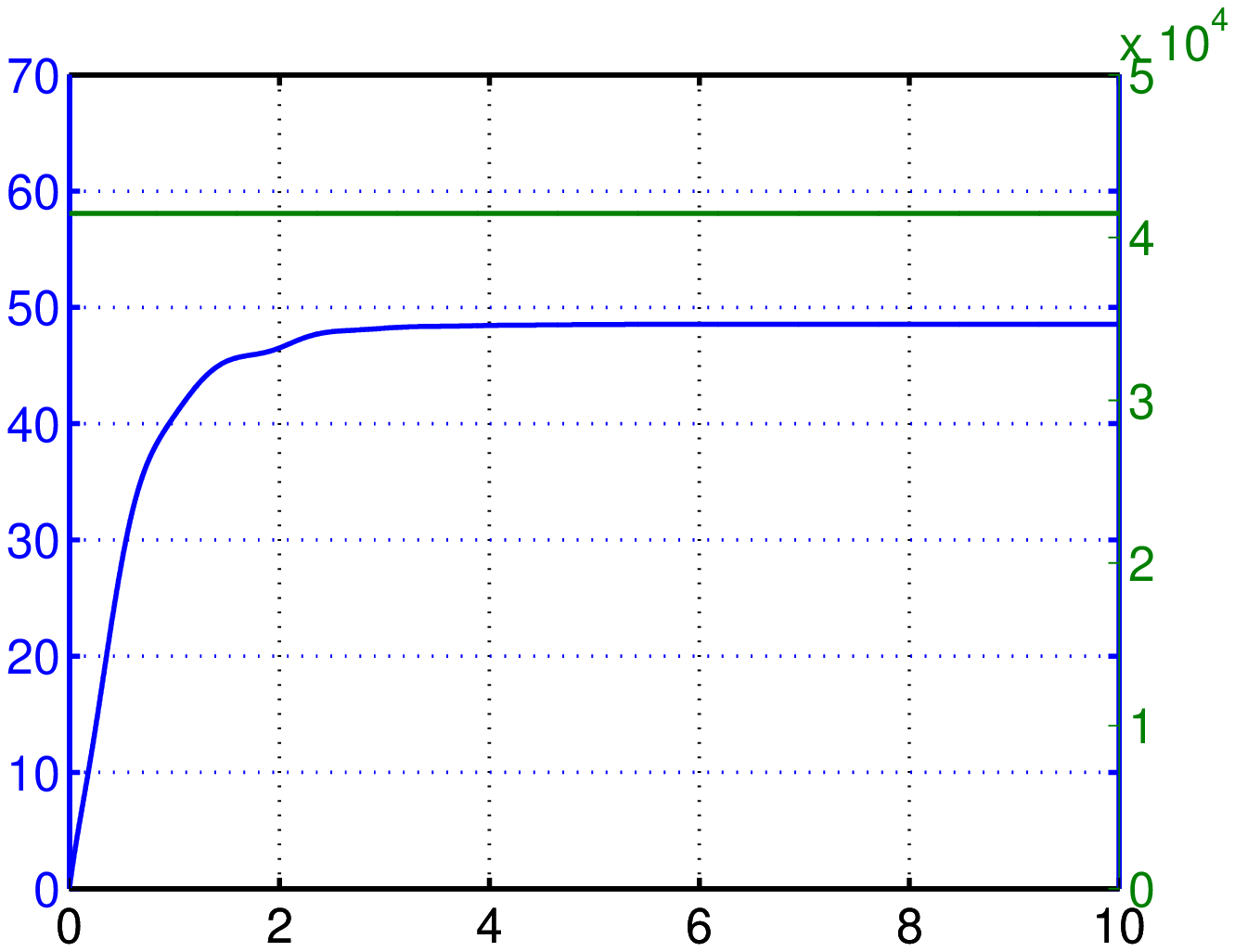}}
\put (-105, 128) {{\scriptsize ${J^ * }$}}
\put (-105, 111) {{\scriptsize ${J_{x\left( t \right)}}$}}
\put (-112, 0) {{ \scriptsize {\it t}~/~s}}
\caption{Guaranteed-performance function for Lipschitz nonlinear cases.}
\end{center}\vspace{-1.8em}
\end{figure}

\begin{figure}[!htb]
\begin{center}
\scalebox{0.5}[0.5]{\includegraphics{./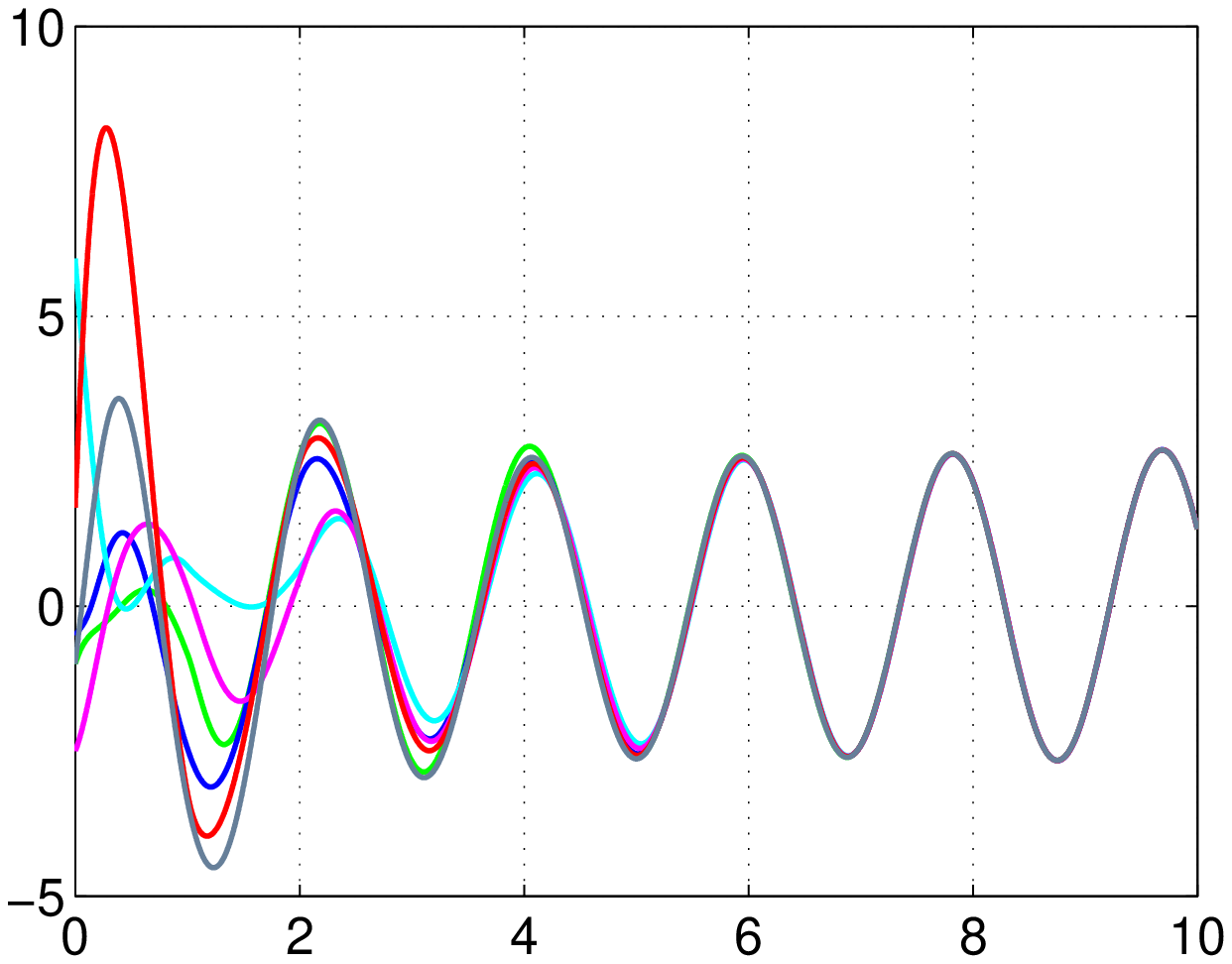}}
\put (-203, 42) {\rotatebox{90} {{\scriptsize $x_{i{\rm 1}}(t)~(i = 1,2, \cdots ,6)$}}}
\put (-110, 0) {{ \scriptsize {\it t}~/~s}}
\end{center}\vspace{-1em}
\end{figure}
\begin{figure}[!htb]
\begin{center}
\scalebox{0.5}[0.5]{\includegraphics{./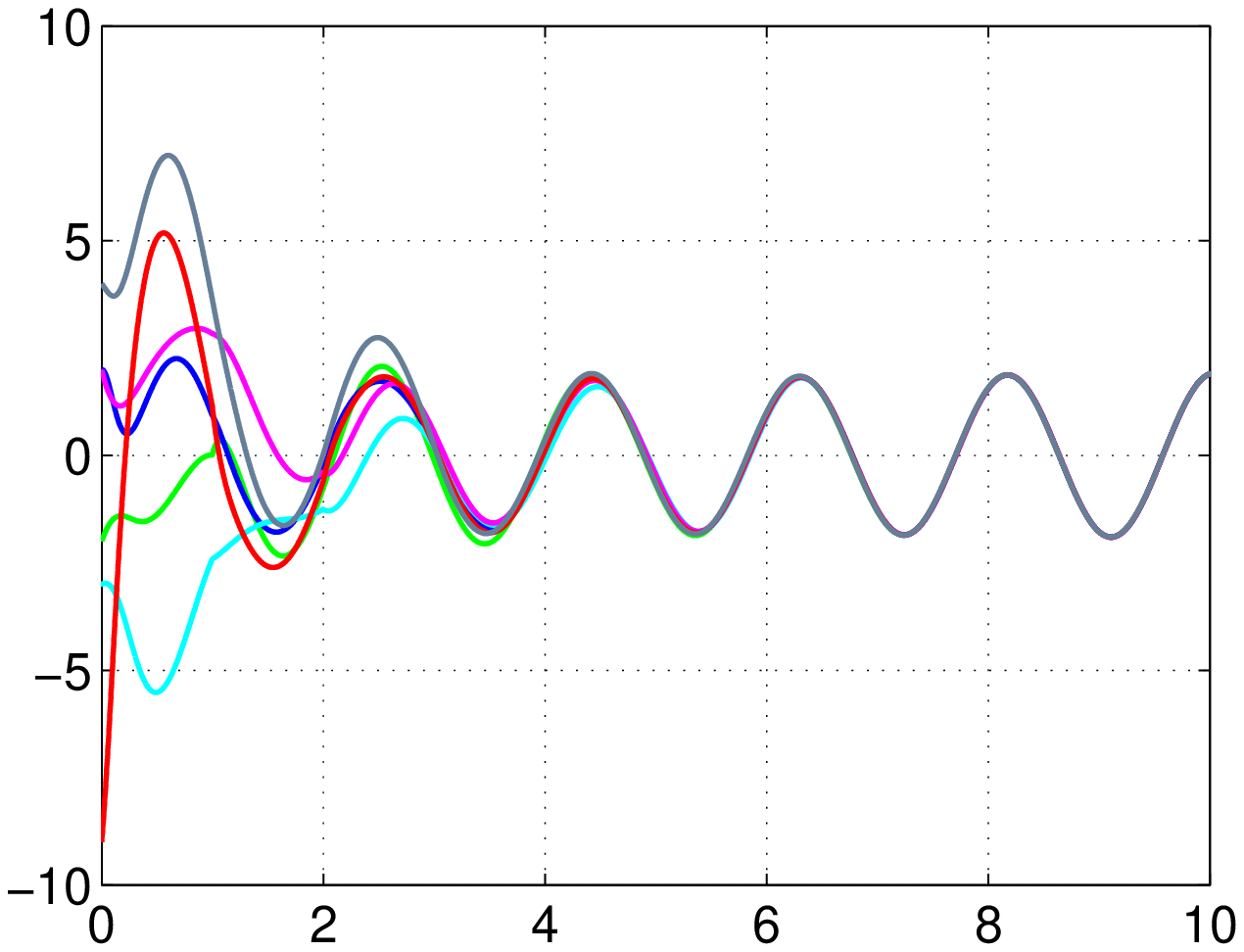}}
\put (-203, 42) {\rotatebox{90} {{\scriptsize $x_{i{\rm 2}}(t)~(i = 1,2, \cdots ,6)$}}}
\put (-110, 0) {{ \scriptsize {\it t}~/~s}}
\end{center}\vspace{-1em}
\end{figure}
\begin{figure}[!htb]
\begin{center}
\scalebox{0.5}[0.5]{\includegraphics{./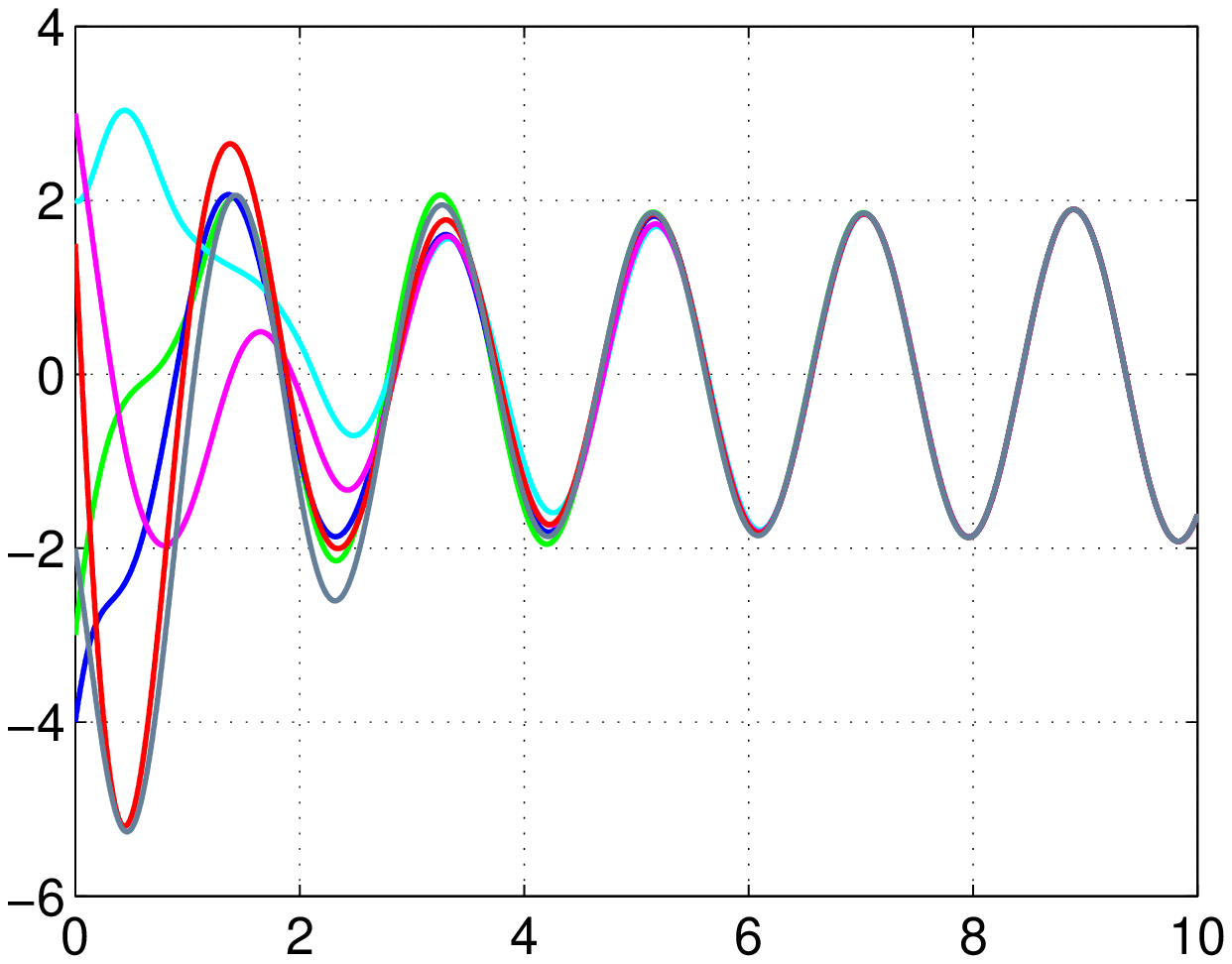}}
\put (-203, 42) {\rotatebox{90} {{\scriptsize $x_{i{\rm 3}}(t)~(i = 1,2, \cdots ,6)$}}}
\put (-110, 0) {{ \scriptsize {\it t}~/~s}}
\end{center}\vspace{-1em}
\end{figure}
\begin{figure}[!htb]
\begin{center}
\scalebox{0.5}[0.5]{\includegraphics{./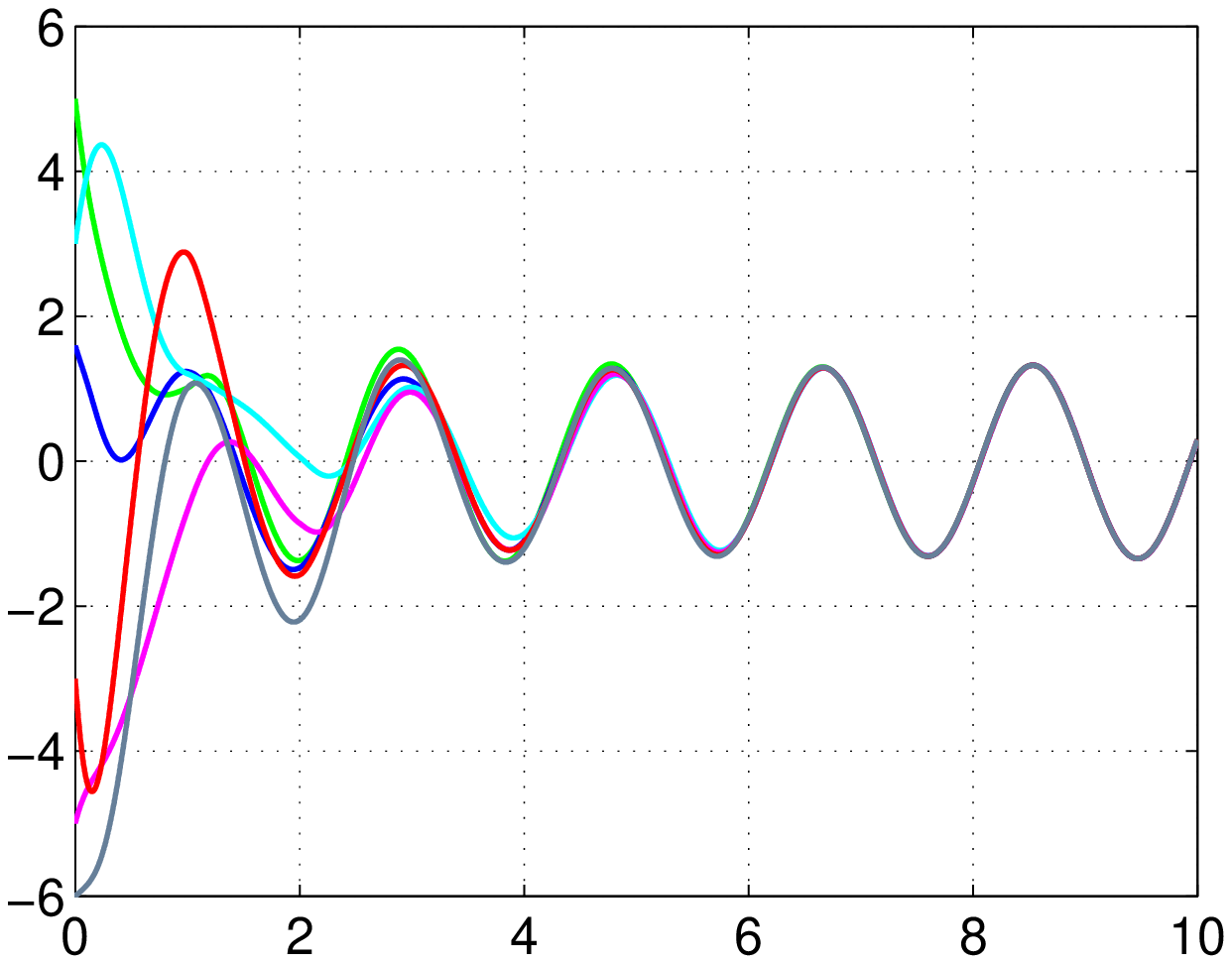}}
\put (-203, 42) {\rotatebox{90} {{\scriptsize $x_{i{\rm 4}}(t)~(i = 1,2, \cdots ,6)$}}}
\put (-110, 0) {{ \scriptsize {\it t}~/~s}}
\caption{State trajectories for Lipschitz nonlinear cases.}
\end{center}
\end{figure}

\end{document}